\begin{document}

\title{Opinion Dynamics in Two-Step Process: Message Sources, Opinion Leaders and Normal Agents}

\author{Huisheng Wang, Yuejiang Li, Yiqing Lin 
and H. Vicky Zhao,~\IEEEmembership{Senior Member,~IEEE}%
\thanks{Huisheng Wang, Yuejiang Li, Yiqing Lin and H. Vicky Zhao are with the Department of Automation, Tsinghua University, Beijing 100084, China, (e-mail: whs22@mails.tsinghua.edu.cn; lyj18@mails.tsinghua.edu.cn; linyq20@mails.tsinghua.edu.cn; vzhao@tsinghua.edu.cn).}%
}

\markboth{Journal of \LaTeX\ Class Files,~Vol.~14, No.~8, August~2022}%
{Wang \MakeLowercase{\textit{et al.}}: Opinion Dynamics in Two-step Social Networks: Sources, Opinion Leaders and Normal Agents.}

\IEEEpubid{0000--0000/00\$00.00~\copyright~2022 IEEE}

\maketitle

\begin{abstract}
According to mass media theory, the dissemination of messages and the evolution of opinions in social networks follow a two-step process. First, opinion leaders receive the message from the message sources, and then they transmit their opinions to normal agents. However, most opinion models only consider the evolution of opinions within a single network, which fails to capture the two-step process accurately. To address this limitation, we propose a unified framework called the Two-Step Model, which analyzes the communication process among message sources, opinion leaders, and normal agents.
In this study, we examine the steady-state opinions and stability of the Two-Step Model. Our findings reveal that several factors, such as message distribution, initial opinion, level of stubbornness, and preference coefficient, influence the sample mean and variance of steady-state opinions. Notably, normal agents' opinions tend to be influenced by opinion leaders in the two-step process.
We also conduct numerical and social experiments to validate the accuracy of the Two-Step Model, which outperforms other models on average. Our results provide valuable insights into the factors that shape social opinions and can guide the development of effective strategies for opinion guidance in social networks.
\end{abstract}

\begin{IEEEkeywords}
Message source, opinion model, opinion leader, two-step process, selective exposure.
\end{IEEEkeywords}

\section{Introduction}

\IEEEPARstart{T}{he} rapid growth of information technology and social media has allowed people to express their opinions freely, but uncontrolled expression can lead to social instability \cite{akram2017study}. Therefore, guiding public opinion has become a critical area of research \cite{monroe1998public, burstein2003impact}. To achieve this, we first need to understand the fundamental principles of opinion evolution. Researchers from various disciplines, including communication, psychology, and political science, are studying how public opinion forms and evolves \cite{wu2008public, leeper2014political, abebe2018opinion, chen2018quantifying}. 

To study how public opinion is shaped and influenced, it is essential to understand the communication process, as communication serves as the primary means by which individuals and groups share their opinions, beliefs, and attitudes with others \cite{mcquail2010mcquail}. In mass media theory, communication refers to the process of transmitting messages and opinions. Messages are content or information conveyed through media channels, while opinions represent individuals' personal beliefs or judgments on particular issues. There exists a close connection between messages and opinions, as messages aim to influence or shape public opinions. 

The mass media theory identifies three key participants in the communication process: message sources, opinion leaders, and normal agents. Message sources are entities, such as media organizations, journalists, or other relevant individuals or institutions, that create and distribute messages \cite{schram1982men}. Opinion leaders are individuals who have a significant influence on shaping the opinions of others, for example, experts in specific fields and influential figures in society \cite{lazarsfeld1968people}. Normal agents are individuals who receive and are influenced by the opinions expressed by opinion leaders \cite{mcquail1987mass}. 

The two-step process is one of the essential models of the communication process in mass media theory, which highlights the hierarchical flow of messages and opinions. In the first step, the messages flow from message sources to opinion leaders. In the second step, opinion leaders receive the messages first, form their opinions, and then transmit their opinions to the normal agents \cite{lazarsfeld1968people}. 
The two-step process indicates that the propagation of messages and opinions follows a multi-layer network structure \cite{dickison2016multilayer}. For example, in online social networks, Internet celebrities act as opinion leaders, and billions of regular users are normal agents. Regular users tend to adopt the opinions of Internet celebrities, whose messages are frequently retweeted, thus forming a two-step process \cite{choi2015two, jiang2021study, xie2021detecting}. These retweeted messages not only increase the visibility of Internet celebrities' opinions but also create a sense of validation and social proof among regular users, influencing their opinions and 
attitudes towards various topics.

\IEEEpubidadjcol

In the field of mass media studies, some qualitative models have been proposed to describe the communication process, such as the contagion model \cite{rogers2010diffusion}, multidimensional scaling model \cite{katz1959preliminary}, and complex network model \cite{morris2018internet}. They provide valuable insights into the dynamics of communication and are instrumental in shaping our understanding of how messages and opinions are transmitted and received. However, these qualitative models may not be sufficient to capture the quantitative properties of public opinion evolution. Quantitative models provide a framework to analyze and predict the dynamics of opinions and offer a more precise understanding of how opinions spread and evolve over time. Therefore, researchers have turned to approaches that integrate mathematical tools, known as opinion dynamics. 

Opinion dynamics models provide quantitative tools to analyze the evolution of opinions over networks, and 
has attracted lots of attention from researchers in mathematics, political science, economics, and social psychology \cite{kaur2013human, zhao2016opinion, carlson2019through, urena2019review, zha2020opinion}. 
Over the past years, numerous opinion dynamics models have been proposed, with the DeGroot (DG) model  \cite{anderson2019recent} and the Friedkin-Johnsen (FJ) model  \cite{noorazar2020recent} serving as the foundation. The DG model is a weighted averaging model \cite{degroot1974reaching}. It assumes that agents weigh the opinions of others based on their perceived expertise and update their opinions accordingly. Based on the DG model, the FJ model introduces agents' intrinsic beliefs to model the social behavior of stubborn agents who are difficult to influence \cite{friedkin1990social}. 
The DG and the FJ models are linear models that enable comprehensive theoretical analyses of the steady states of the opinion evolution process. Nevertheless, the simplicity of their model assumptions hinders their ability to capture the intricacies of the communication process in mass media theory, as demonstrated in the following.

First, the mass media theory suggests that the opinions of opinion leaders and normal agents follow a two-step process, meaning that opinion leaders and normal agents are situated in separate networks. This suggests that opinion exchange and influence occur among both opinion leaders and normal agents. However, the flow of opinions is unidirectional from opinion leaders to normal agents, meaning that opinion leaders influence the opinions of normal agents, but not the other way around. On the other hand, both the DG and FJ models allow for mutual influence between any two agents, disregarding the distinction between opinion leaders and normal agents. Instead, these models treat all agents as the same type. In other words, they do not differentiate between opinion leaders and normal agents, opting to model them all as the same type. Previous models on opinion leaders (OL) in opinion dynamics have demonstrated the influence of opinion leaders on the mean and variance of normal agents' opinions \cite{chen2016characteristics, lu2019fuzzy, shi2021leader}. They focus on the impact of opinion leaders on normal agents but do not form a unified framework to analyze the evolution patterns of opinions among opinion leaders and normal agents. $\square$

Second, message sources are considered the starting point of the two-step process. However, neither the DG nor the FJ models explicitly incorporate the message sources. To the best of our knowledge, there have been relatively few models that focus on modeling message sources. The Cyber-Social Network (CSN) model \cite{mao2018spread, mao2022social} is a pioneering work in this area, which incorporates both agents and message sources. This model effectively incorporates the concept of message sources into the communication process, but it does not consider the influence of opinion leaders. In this paper, we aim to build upon the CSN model and extend it to the two-step process.

Third, according to mass media theory, the selective exposure feature refers to agents' tendency to have biases and preferences when receiving messages and opinions \cite{sears1967selective, allahverdyan2014opinion}. In the DG and FJ models, an agent's weight coefficients towards others' opinions remain fixed and independent of the opinions. To account for selective exposure, researchers have modified the opinion-updating equation and proposed a class of bounded confidence models. The Hegselmann-Krause (HK) model \cite{hegselmann2002opinion} defines neighbor agents whose opinions are within a given range and follow the FJ model, while non-neighbor agents do not affect each other; the biased opinion formation (BOF) model \cite{dandekar2013biased} introduces a biased opinion term to model selective exposure; and the stochastic bounded confidence (SBC) model \cite{liu2013social, baumann2021emergence} defines the neighborhood probability of the agents as a function of their opinions' distance and updates the opinion according to the HK model. However, obtaining analytical solutions for steady-state (SS) opinions is challenging for these models. To address this issue, we introduce the concept of message preference, which captures selective exposure and provides approximate analytical solutions for steady-state opinions in the two-step process.

\begin{table}
\caption{Comparison between our model and existing models.}
\label{tab:tab0}
\centering
\setlength{\tabcolsep}{3.6pt}
\begin{tabular}{c|ccccccc|c}
\toprule
Model&DG&FJ&HK&BOF&SBC&OL&CSN&Our model\\
\midrule
Message source&\color{purple}{\ding{55}}&\color{purple}{\ding{55}}&\color{purple}{\ding{55}}&\color{purple}{\ding{55}}&\color{purple}{\ding{55}}&\color{purple}{\ding{55}}&\color{teal}{\ding{51}}&\color{teal}{\ding{51}}\\
Opinion leader&\color{purple}{\ding{55}}&\color{purple}{\ding{55}}&\color{purple}{\ding{55}}&\color{purple}{\ding{55}}&\color{purple}{\ding{55}}&\color{teal}{\ding{51}}&\color{purple}{\ding{55}}&\color{teal}{\ding{51}}\\
Normal agent&\color{teal}{\ding{51}}&\color{teal}{\ding{51}}&\color{teal}{\ding{51}}&\color{teal}{\ding{51}}&\color{teal}{\ding{51}}&\color{teal}{\ding{51}}&\color{teal}{\ding{51}}&\color{teal}{\ding{51}}\\
Selective exposure&\color{purple}{\ding{55}}&\color{purple}{\ding{55}}&\color{teal}{\ding{51}}&\color{teal}{\ding{51}}&\color{teal}{\ding{51}}&\color{teal}{\ding{51}}&\color{teal}{\ding{51}}&\color{teal}{\ding{51}}\\
\bottomrule
\end{tabular}
\end{table}

Different from most prior work that only focuses on one or two specific aspects of the mass media theory and has not formed a framework to describe the communication process, in this paper, we propose a unified Two-Step Model that jointly considers message sources, opinion leaders, and the normal agents, as well as the selective exposure feature (Table \ref{tab:tab0}). It should be noted that there are many forms of opinion models that we cannot list one by one, but to the best of our knowledge, there is currently no model proposed that presents a unified framework. Our investigation is crucial in shedding light on the communication process and its impact on the evolution of opinions in social networks.

Our contributions can be summarized as follows:
\begin{enumerate}
    \item Based on mass media theory and opinion dynamics, we propose a unified Two-Step Model, which includes message sources, opinion leaders, and normal agents.
    \item We introduce the mechanism of message preference to describe selective exposure for mathematical tractability. 
    \item We use dynamic system theory to analyze the properties of the model theoretically, including steady-state opinions and the influence of parameters.
    \item We conduct numerical and social experiments to verify the validity of our proposed model. 
\end{enumerate}

The rest of this paper is organized as follows. We introduce the model assumptions in Section \ref{sec:preliminary}, and propose the Two-Step Model in Section \ref{sec:basic}. We delve into the properties of the model and conduct the numerical and social experiments in Sections \ref{sec:analysis}, \ref{sec:experiments}, and \ref{sec:social}, respectively. Finally, we provide concluding remarks in Section \ref{sec:conclusion}.

\section{Model Assumptions}
\label{sec:preliminary}


In the two-step process, the message sources are the starting point and convey the messages to the opinion leaders. The opinion leaders get the messages from the sources, convert them into their opinions according to the selective exposure feature, and then send them to the normal agents. The normal agents access the opinions of the opinion leaders, interact with their neighbors and then form opinions. To simplify the modeling, we begin by making the following assumptions:

\begin{enumerate}
    \item Message sources' messages are mutually independent and correspond to a time-invariant distribution \cite{giardini2021opinion, mao2022social}.
    \item Opinion leaders have access to sources' messages \cite{lazarsfeld1968people}.
    \item Opinion leaders are not influenced by the opinions of any other opinion leaders, and their network is a null graph with no edges. \item Normal agents do not have access to sources' messages.
    \item Normal agents receive opinions from opinion leaders \cite{lazarsfeld1968people}, and the network between opinion leaders and normal agents is a complete bipartite graph. 
    \item Normal agents influence each other's opinion, and they form a complete graph network \cite{anderson2019recent}.
\end{enumerate}

It is worth noting that we consider the above simple scenario for mathematical tractability and to obtain insights. And we will take into consideration in our future work the more complicated circumstances where messages change over time, opinion leaders influence each other, and normal agents can access partial messages from sources.

\section{The Two-Step Model}
\label{sec:basic}

\begin{figure}[!t]
\centerline{\includegraphics[width=\columnwidth]{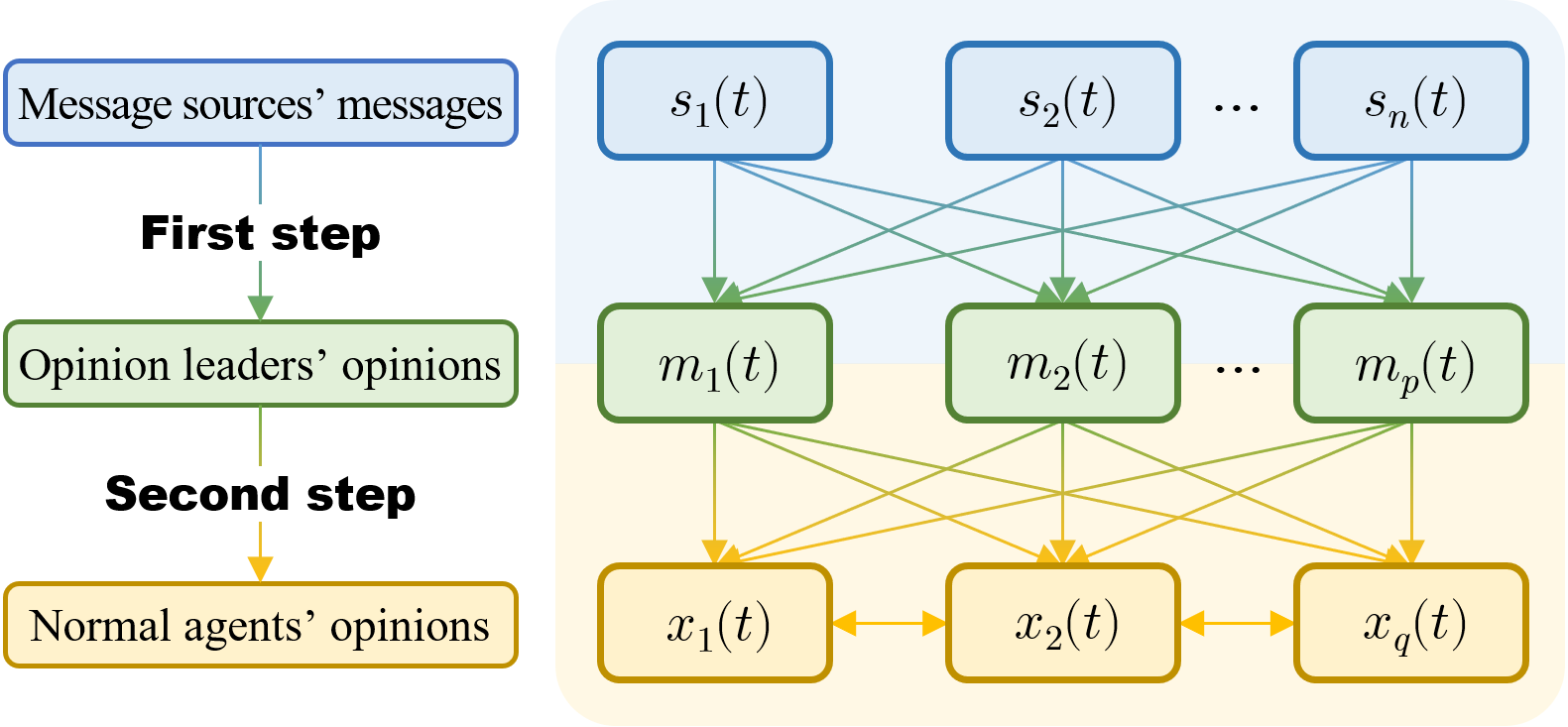}}
\caption{The communication process of the Two-Step Model.}
\label{fig:fig1}
\end{figure}

The Two-Step Model consists of message sources, opinion leaders, and normal agents. Consider a community of $n$ message sources, $p$ opinion leaders, and $q$ normal agents. Let $\boldsymbol{s}^{(t)}$ be the column vector of the sources' messages at time $t$. Let $\boldsymbol{m}^{(t)}$ and $\boldsymbol{x}^{(t)}$ be the column vector of the opinion leaders and normal agents' opinions at time $t$. Fig. \ref{fig:fig1} shows the communication process of the Two-Step Model.

\subsection{Message Source}
Following prior work \cite{mirtabatabaei2014eulerian, mao2018spread, mao2022social}, the sources' messages at time $t$, denoted by $\boldsymbol{s}^{(t)}$, is quantified by a series of random variable ranging in $[0,1]$. 

\newtheorem{definition}{\bf Definition}
\newtheorem{corollary}{\bf Corollary}
\newtheorem{theorem}{\bf Theorem}

\begin{definition}[Message distribution]
The message of source $i$ at time $t$ is a random variable that follows the time-invariant message distribution $\mathcal{S}$ independently and identically, i.e., 
\begin{equation}
    s_i^{(t)}\sim \mathcal{S},\forall i=1,2,\dots,n,t=1,2,\dots
    \label{equ:source}
\end{equation}
\end{definition}

The message $s_i^{(t)}$ close to $1$ represents that the source $i$ tends to support an event at time $t$ and vice versa.

\subsection{Opinion Leader}
At time $t$, each opinion leader receives all sources' message $\boldsymbol{s}^{(t)}$ without any distortion. Following the CSN model \cite{mao2018spread}, opinion leader $i$'s opinion, denoted by $m_i^{(t)}$, is a linear combination of the initial opinion $m_i^{(0)}$ and the average of the messages $\boldsymbol{s}^{(t)}$. The time-invariant weight coefficient $\sigma_i\in[0,1]$ is called the level of stubbornness. For a fully stubborn opinion leader $\sigma_i=1$, for a partially-stubborn opinion leader $0<\sigma_i<1$, and for a non-stubborn opinion leader $\sigma_i=0$. 
\begin{definition}[Opinion leaders' opinions]
The opinion of the opinion leader $i$ at time $t$ is given by
\begin{equation}
    m_i^{(t)}=\sigma_i m_i^{(0)}+(1-\sigma_i)\sum_{j=1}^n\gamma_{ij}^{(t)}s_j^{(t)}.
    \label{equ:m_i^{(t)}}
\end{equation}
\end{definition}

As shown in \eqref{equ:m_i^{(t)}}, $\gamma_{ij}^{(t)}$ is the weight of message $s_j^{(t)}$, and $\sum_{j=1}^n{\gamma}_{ij}^{(t)}=1$. And the vector $\boldsymbol{\gamma}_i^{(t)}=[\gamma_{i1}^{(t)},\gamma_{i2}^{(t)},\dots,\gamma_{in}^{(t)}]^\top$ is called the selective coefficient vector. 
Taking into account the selective exposure feature, if the distance between the opinion leader's opinion $m_i^{(t-1)}$ and the message $s_j^{(t)}$ is smaller, the selective coefficient $\gamma_{ij}^{(t)}$ should be larger. Before providing the calculation formula for the selective coefficient, we first introduce an auxiliary variable called message preference.

\begin{definition}[Message Preference]
The message preference of opinion leader $i$ for message source $j$ at time $t$ is given by
\begin{equation}
    p_{ij}^{(t)}=\{[m_i^{(t-1)}-s_j^{(t)}]^2\}^{\alpha-1}\{1-[m_i^{(t-1)}-s_j^{(t)}]^2\}^{\beta-1},
    \label{equ:message preference}
\end{equation}
where exponential parameters $\alpha\leqslant1$ and $\beta>1$ are called the preference coefficients.
\end{definition}
Equation \eqref{equ:message preference} shows that when $\alpha\leqslant1$ and $\beta>1$, opinion leaders prefer the messages closer to their own opinions at the previous instant, which is consistent with the selective exposure feature. And the preference coefficients $\alpha$ and $\beta$ reflect its strength: the larger value of $\alpha$ and the smaller value of $\beta$ means the stronger selective exposure feature. 
Because the selective coefficient $\boldsymbol{\gamma}_i^{(t)}$ is a weight vector, we define it as the normalized message preference $\boldsymbol{p}_i^{(t)}=[p_{i1}^{(t)},p_{i2}^{(t)},\dots,p_{in}^{(t)}]^\top$, i.e.,
\begin{equation*}
    \boldsymbol{\gamma}_i^{(t)}=\Vert\boldsymbol{p}_i^{(t)}\Vert_1^{-1}\boldsymbol{p}_i^{(t)}.
    \label{equ:selective-coefficent}
\end{equation*}

It should be noted that when $m_i^{(t-1)}=s_j^{(t)}$ and $\alpha<1$, the message preference $p_{ij}^{(t)}$ tends to infinity, and thus the selective coefficient $\gamma_{ij}^{(t)}=1$. That is to say, if the opinion leader receives a message that matches the opinion, the opinion will remain unchanged, which is consistent with \cite{10.2307/2118364}.

So far, we have defined the evolution equation of opinion leaders. The opinion of an opinion leader is the stubbornness-weighted average of the initial opinion and the weighted average of the messages they receive, and the weight of messages is determined by the normalized message preference that characterizes the selective exposure feature. We can rewrite \eqref{equ:m_i^{(t)}} as 
\begin{equation}
    \boldsymbol{m}^{(t)}=\boldsymbol{\mit\Sigma}\boldsymbol{m}^{(0)}+(\boldsymbol{I}-\boldsymbol{\mit\Sigma})\boldsymbol{\mit\Gamma}^{(t)}\boldsymbol{s}^{(t)},
    \label{equ:boldm}
\end{equation}
where $\boldsymbol{\mit\Gamma}^{(t)}=[\boldsymbol{\gamma}_1^{(t)},\boldsymbol{\gamma}_2^{(t)},\dots,\boldsymbol{\gamma}_p^{(t)}]^\top$ is the external influence matrix, which is row-stochastic, i.e., its entries are nonnegative and the sum of each row is $1$, 
and $\boldsymbol{\mit\Sigma}=\mathrm{diag}\{\sigma_1,\sigma_2,\dots,\sigma_p\}$ is the stubbornness matrix. 
For convenience, equation \eqref{equ:boldm} is named the message preference (MP) model, hereinafter.

\subsection{Normal Agent}
Following prior work \cite{friedkin1990social}, each normal agent's opinion is a linear function of all agents' opinions, including the opinion leaders and the normal agents. Specifically, at time $t$, normal agent $i$'s opinion, denoted by $x_i^{(t)}$, is a linear combination of the initial opinion $x_i^{(0)}$, the weighted average of the normal agents' opinions $\boldsymbol{x}^{(t-1)}$, and the weighted average of the opinion leaders' opinions $\boldsymbol{m}^{(t)}$. The corresponding weight coefficients are called the level of stubbornness $\rho_i\in[0,1]$, the normal agent influence proportion $\pi_i\in[0,1]$, and the opinion leader influence proportion $\theta_i\in[0,1]$, respectively, and $\rho_i+\pi_i+\theta_i=1$. 

Suppose that the network between normal agents is a complete graph, and its row-stochastic influence matrix, denoted by $\boldsymbol{W}$, is called the normal agent influence matrix, whose entry $w_{ij}\in[0,1]$ is the level of influence of normal agent $j$ on $i$. The network between opinion leaders and normal agents is a complete bipartite graph, and its row-stochastic influence matrix, denoted by $\boldsymbol{U}$, is called the opinion leader influence matrix, whose entry $u_{ij}\in[0,1]$ is the level of influence of opinion leader $j$ on normal agent $i$.

\begin{definition}[Normal agents' opinions]
The opinion of the normal agent $i$ at time $t$ is a combination of her initial opinion $x_i^{(0)}$ is given by
\begin{equation}
    x_i^{(t)}=\rho_i x_i^{(0)}+\pi_i \sum_{j=1}^q{w_{ij}x_j^{(t-1)}}+\theta_i \sum_{j=1}^p{u_{ij}m_j^{(t)}}.
    \label{equ:x_j^{(t)}}
\end{equation}
\end{definition}
Equation \eqref{equ:x_j^{(t)}} can be rewritten with matrices and vectors as 
\begin{equation}
    \boldsymbol{x}^{(t)}=\boldsymbol{P}\boldsymbol{x}^{(0)}+\boldsymbol{\mit\Pi}\boldsymbol{W}\boldsymbol{x}^{(t-1)}+\boldsymbol{\mit\Theta}\boldsymbol{U}\boldsymbol{m}^{(t)},
    \label{equ:boldx}
\end{equation}
where $\boldsymbol{P}=\mathrm{diag}\{\rho_1,\rho_2,\dots,\rho_q\}$, $\boldsymbol{\mit\Pi}=\mathrm{diag}\{\pi_1,\pi_2,\dots,\pi_q\}$, and $\boldsymbol{\mit\Theta}=\mathrm{diag}\{\theta_1,\theta_2,\dots,\theta_q\}$ are called the stubbornness matrix, the normal agent influence proportion matrix, and the opinion leader influence proportion matrix, respectively.


\section{Analysis of the Opinion Dynamics}
\label{sec:analysis}
We have proposed the Two-Step Model and presented the message and opinion dynamics for message sources, opinion leaders, and normal agents, as described in \eqref{equ:source}, \eqref{equ:boldm}, and \eqref{equ:boldx}, respectively. In this section, we aim to determine the steady-state opinions of \eqref{equ:boldm} and \eqref{equ:boldx}, and analyze how the model parameters affect the steady-state opinions. 

\subsection{Steady-state Analysis}
\subsubsection{Opinion leader's opinion}
We begin with a degenerate case $\alpha=\beta=1$, where opinion leaders have the same preference for all messages. In this case, equation \eqref{equ:m_i^{(t)}} is simplified as
\begin{equation*}
    m_i^{(t)}=\sigma_im_i^{(0)}+(1-\sigma_i)\frac{1}{n}\sum_{j=1}^ns_j^{(t)}.
\end{equation*}
When the number of sources $n$ tends to infinity, the opinion is almost surely equal to the weighted average of the initial opinion and the expectation of message distribution according to the Central Limit Theorem (CLT). Therefore, the opinion is time-invariant, and the steady-state opinion is
\begin{equation}
    \tilde{m}_i=m_i^{(t)}=\sigma_im_i^{(0)}+(1-\sigma_i)\mu,
    \label{equ:tildem-alpha=beta=1}
\end{equation}
when $\alpha=\beta=1$. When $\alpha\leqslant1$ and $\beta>1$, equation \eqref{equ:boldm} is a nonlinear difference equation, and it is expensive to find a closed-form solution. To overcome this issue, we adopt a data-fitting method to approximate the steady-state opinion.

\begin{theorem}[Steady-state opinion leaders' opinions]
If the number of sources $n$ is large sufficiently, the numerical steady-state opinions of equation \eqref{equ:boldm} is approximately given by
\begin{equation}
    \tilde{\boldsymbol{m}}\approx\boldsymbol{\mit\Sigma}^{\frac{\lambda\ln{\alpha}+1}{\kappa(\beta-1)+1}} \boldsymbol{m}^{(0)}+\mu\left(\boldsymbol{I}-\boldsymbol{\mit\Sigma}^{\frac{\lambda \ln{\alpha}+1}{\kappa(\beta-1)+1}} \right)\boldsymbol{1},
    \label{equ:m}
\end{equation}
when $\alpha\leqslant1$ and $\beta>1$, where $\mu$ is the expectation of the message distribution, and $\lambda\approx1.15$ and $\kappa\approx0.18$ are the constant parameters independent of $\alpha$, $\beta$, and $\boldsymbol{\mit\Sigma}$, etc. 
\label{the:the1}
\end{theorem}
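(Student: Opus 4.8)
The plan is to reduce the vector recursion \eqref{equ:boldm} to $p$ independent scalar recursions, pass to the large-$n$ limit to obtain a deterministic nonlinear iteration, identify the structure of its fixed point, and then pin down one scalar exponent by a data fit --- which is the step that the statement's ``$\approx$'' and the empirical constants $\lambda,\kappa$ already announce cannot be made exact.

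First I would use the third model assumption (the opinion-leader network is a null graph): by \eqref{equ:m_i^{(t)}} each coordinate $m_i^{(t)}$ evolves on its own, so it suffices to study the scalar recursion $m^{(t)}=\sigma m^{(0)}+(1-\sigma)\sum_{j=1}^n\gamma_j^{(t)}s_j^{(t)}$ with $\gamma_j^{(t)}\propto\phi(m^{(t-1)},s_j^{(t)})$, $\phi(m,s):=[(m-s)^2]^{\alpha-1}[1-(m-s)^2]^{\beta-1}$. Just as the degenerate case $\alpha=\beta=1$ was settled by the large-$n$ (law of large numbers) argument that produced \eqref{equ:tildem-alpha=beta=1}, I would apply a law of large numbers to the numerator and denominator of the normalized weight separately: conditioning on $m^{(t-1)}$, the $s_j^{(t)}$ are i.i.d.\ $\mathcal{S}$, so as $n\to\infty$ we get $\sum_j\gamma_j^{(t)}s_j^{(t)}\to g(m^{(t-1)})$ almost surely, where $g(m):=\mathbb{E}_{s\sim\mathcal{S}}[\phi(m,s)\,s]/\mathbb{E}_{s\sim\mathcal{S}}[\phi(m,s)]$ is a selective-exposure-weighted mean of $\mathcal{S}$ (the singularity of $\phi$ at $s=m$ for $\alpha<1$ is integrable in the relevant parameter range, and $\mathbb{P}(s_j^{(t)}=m^{(t-1)})=0$ for continuous $\mathcal{S}$). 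Hence for large $n$ the leader dynamics is, to high accuracy, the deterministic scalar iteration $m^{(t)}=\sigma m^{(0)}+(1-\sigma)g(m^{(t-1)})$.

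Next I would establish convergence and the form of the limit. Since $g$ maps $[0,1]$ into itself with bounded derivative away from the degenerate case, I would argue that $m\mapsto\sigma m^{(0)}+(1-\sigma)g(m)$ is a contraction on $[0,1]$ in the regime of interest (automatic for $\sigma>0$; for $\sigma=0$ it needs $g$ itself contractive), giving a unique steady state $\tilde m$ with $m^{(t)}\to\tilde m$. To obtain its closed form I would linearize the weighted mean as $g(m)\approx\varepsilon(\alpha,\beta)\,m+(1-\varepsilon(\alpha,\beta))\mu$ with $\varepsilon\in[0,1)$ quantifying how strongly selective exposure drags the weighted mean toward the leader's own opinion ($\varepsilon=0$ iff $\alpha=\beta=1$). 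Solving $\tilde m=\sigma m^{(0)}+(1-\sigma)[\varepsilon\tilde m+(1-\varepsilon)\mu]$ yields $\tilde m=\sigma'm^{(0)}+(1-\sigma')\mu$ with effective stubbornness $\sigma'=\sigma/[1-(1-\sigma)\varepsilon]$, a smooth increasing bijection of $[0,1]$ that fixes $0$ and $1$ and satisfies $\sigma'\geqslant\sigma$. Collecting coordinates gives $\tilde{\boldsymbol m}=\boldsymbol{\mit\Sigma}'\boldsymbol m^{(0)}+\mu(\boldsymbol I-\boldsymbol{\mit\Sigma}')\boldsymbol 1$, so it remains only to write $\boldsymbol{\mit\Sigma}'$ as a matrix power $\boldsymbol{\mit\Sigma}^{c(\alpha,\beta)}$.

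The final step is where the result becomes empirical. Approximating $\sigma'\approx\sigma^{c}$ --- a one-parameter family with the correct endpoints, the correct monotonicity, and $c=1$ exactly when $\alpha=\beta=1$ --- I would run the deterministic iteration over a grid of $(\alpha,\beta,\sigma)$, read off $c=\ln\tilde\sigma'/\ln\sigma$ from the simulated steady states, and regress $c$ on $\alpha$ and $\beta$. The separable ansatz $c=(\lambda\ln\alpha+1)/(\kappa(\beta-1)+1)$ is natural because $c$ must equal $1$ when $\alpha=1$ and when $\beta=1$, must decrease as selective exposure strengthens ($\alpha\uparrow1$, $\beta\downarrow1$), and a $\ln\alpha$ dependence matches the way $(m-s)^2$ enters $\phi$ multiplicatively; least squares then returns $\lambda\approx1.15$ and $\kappa\approx0.18$. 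The main obstacle, and the reason the statement is only ``$\approx$'', is exactly this: there is no structural reason for the true $\sigma'(\alpha,\beta,\sigma)$ to be a $\sigma$-power with a separable exponent, so the honest content is the reduction above together with a quantitative check that the fit residuals stay small over the parameter range of practical interest.
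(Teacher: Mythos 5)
Your proposal follows essentially the same route as the paper's proof: pass to the large-$n$ limit to replace the weighted sums by the ratio of expectations $\mathbb{E}_{s\sim\mathcal{S}}[\phi(\tilde m,s)\,s]/\mathbb{E}_{s\sim\mathcal{S}}[\phi(\tilde m,s)]$, obtain the nonlinear fixed-point equation, and then numerically fit the power-of-$\boldsymbol{\mit\Sigma}$ ansatz with the separable exponent $(\lambda\ln\alpha+1)/(\kappa(\beta-1)+1)$. Your only additions are the intermediate linearization of the weighted mean (which gives some structural justification for the affine form in $m^{(0)}$ and $\mu$ before the power-law fit) and the contraction remark; the paper instead fits the $\alpha=1$ case first, extends to $\alpha<1$, and defers convergence to its separate stability theorem, but the substance is identical.
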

\begin{proof}
First, we consider the conditions of $\alpha=1$ and $\beta>1$. In this case, equation \eqref{equ:m_i^{(t)}} can be derived as
\begin{equation*}
    m_i^{(t)}=\sigma_im_i^{(0)}+(1-\sigma_i)\frac{\sum_{j=1}^n{\{1-[m_i^{(t-1)}-s_j^{(t)}]^{2}\}^{\beta-1}s_j^{(t)}}}{\sum_{j=1}^n{\{1-[m_i^{(t-1)}-s_j^{(t)}]^{2}\}^{\beta-1}}}.
\end{equation*}
When $n$ tends to infinity, the summation is almost surely equal to the expectation according to the CLT. And the steady-state equation is given by
\begin{equation}
    \tilde{m}_i=\sigma_im_i^{(0)}+(1-\sigma_i)\frac{\mathbb{E}_{s\sim\mathcal{S}}\{[1-(\tilde{m}_i-s)^{2}]^{\beta-1}s\}}{\mathbb{E}_{s\sim\mathcal{S}}\{{[1-(\tilde{m}_i-s)^{2}]^{\beta-1}\}}}.
    \label{equ:tildem-explicit}
\end{equation}
Equation \eqref{equ:tildem-explicit} is a nonlinear equation for $\tilde{m}_i$, which can be solved by numerical methods. 
Inspired by equation \eqref{equ:tildem-alpha=beta=1}, we approximate the steady-state opinion $\tilde{m}_i$ as a weighted average of the initial opinion and the expectation of message distribution. The weight coefficient is a function of $\sigma_i$ and $\beta$, and we observe that it can be approximated by an exponential function of $\sigma_i$ with an exponent given by $w_\beta$. Specifically, we assume that
\begin{equation*}
    \tilde{m}_i\approx \sigma_i^{w_\beta}m_i^{(0)}+\left(1-\sigma_i^{w_\beta}\right)\mu.
    \label{equ:tildem-approx}
\end{equation*}
To determine the weight coefficient, we use the numerical solution from \eqref{equ:tildem-explicit}. By fitting the curve of $\beta$ and $w_\beta$, we find that $w_\beta$ is a fractional function $w_\beta=[\kappa(\beta-1)+1]^{-1}$, where $\kappa\approx0.18$ is a constant parameter. Therefore,
\begin{equation*}
    \tilde{m}_i\approx\sigma_i^{\frac{1}{\kappa(\beta-1)+1}}m_i^{(0)}+\left[1-\sigma_i^{\frac{1}{\kappa(\beta-1)+1}}\right]\mu.
\end{equation*}
Notably, this is a specific form of \eqref{equ:m} when $\alpha=1$ and $\beta>1$.

Next, we consider the case when $\alpha<1$ and $\beta>1$. Similarly, we assume that the steady-state opinion can be approximately expressed as
\begin{equation*}
    \tilde{m}_i\approx\sigma_i^{\frac{w_\alpha}{\kappa(\beta-1)+1}}m_i^{(0)}+\left[1-\sigma_i^{\frac{w_\alpha}{\kappa(\beta-1)+1}}\right]\mu,
\end{equation*}
where $w_\alpha$ is a function of $\alpha$. By fitting the curve of $\alpha$ and $w_\alpha$, we find that $w_\alpha$ is a logarithmic function $w_\alpha=\lambda\ln{\alpha}+1$, where $\lambda\approx1.15$ is a constant parameter. Therefore,
\begin{equation*}
    \tilde{m}_i\approx\sigma_i^{\frac{\lambda\ln{\alpha}+1}{\kappa(\beta-1)+1}}m_i^{(0)}+\left[1-\sigma_i^{\frac{\lambda\ln{\alpha}+1}{\kappa(\beta-1)+1}}\right]\mu.
\end{equation*}
So far we get the equation \eqref{equ:m}.
\end{proof}

It is worth noting that the ``sufficiently large'' condition for the number of sources is easily satisfied. As shown in Section \ref{sec:experiments}, equation \eqref{equ:m} provide a good approximation of the steady-state opinions when $n$ is greater than or equal to $3$. 

Theorem \ref{the:the1} indicates that the steady-state opinion is approximately equal to a weighted average of the initial opinion and the expectation of message distribution, and the weight coefficient is a modified level of stubbornness, which is a function of $\alpha$, $\beta$, and $\boldsymbol{\mit\Sigma}$. For convenience, we denote the modified level of stubbornness in the steady-state opinion of opinion leaders in \eqref{equ:m} as $\boldsymbol{Z}=\boldsymbol{\mit\Sigma}^{\frac{\lambda \ln{\alpha}+1}{\kappa(\beta-1)+1}}$.

Next, we examine the stability of steady-state opinions.

\begin{theorem}[Stability]\label{the:the2}
When $\alpha\in(0.5,1]$ and $\beta\in(1,+\infty)$, the steady-state opinion in \eqref{equ:m} is asymptotically stable. 
\end{theorem}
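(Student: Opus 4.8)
The plan is to reduce the statement to a one–dimensional linearization (Lyapunov's indirect method) and then to bound a single derivative. By Assumption~3 the opinion leaders' network is a null graph, so the coordinates of \eqref{equ:boldm} decouple and each $m_i^{(t)}$ evolves on $[0,1]$ by $m_i^{(t)}=\sigma_i m_i^{(0)}+(1-\sigma_i)\boldsymbol{\gamma}_i^{(t)\top}\boldsymbol{s}^{(t)}$, with $\boldsymbol{\gamma}_i^{(t)}$ the normalized message preference determined by $m_i^{(t-1)}$. Exactly as in the proof of Theorem~\ref{the:the1}, for $n$ large the weighted average $\boldsymbol{\gamma}_i^{(t)\top}\boldsymbol{s}^{(t)}$ converges almost surely to its population value, so the dynamics is governed by the deterministic scalar map $m_i^{(t)}=f_i\big(m_i^{(t-1)}\big)$ with
\[
 f_i(m)=\sigma_i m_i^{(0)}+(1-\sigma_i)g(m),\qquad
 g(m)=\frac{\mathbb{E}_{s\sim\mathcal{S}}\big\{[(m-s)^2]^{\alpha-1}[1-(m-s)^2]^{\beta-1}s\big\}}{\mathbb{E}_{s\sim\mathcal{S}}\big\{[(m-s)^2]^{\alpha-1}[1-(m-s)^2]^{\beta-1}\big\}},
\]
and $\tilde m_i$ (the root of \eqref{equ:tildem-explicit}, approximated by \eqref{equ:m}) is precisely the fixed point of $f_i$. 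By Lyapunov's indirect method it then suffices to verify $\big|f_i'(\tilde m_i)\big|=(1-\sigma_i)\big|g'(\tilde m_i)\big|<1$ for every $i$; since $1-\sigma_i\in[0,1]$, the whole task collapses to showing $\big|g'(m)\big|<1$ on the interior of $\operatorname{supp}\mathcal{S}$.

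First I would compute $g'$. Writing $w(m,s)=[(m-s)^2]^{\alpha-1}[1-(m-s)^2]^{\beta-1}$ and letting $\nu_m$ be the probability measure with $d\nu_m\propto w(m,s)\,d\mathcal{S}(s)$, one has $g(m)=\mathbb{E}_{\nu_m}[s]$ and hence, formally,
\[
 g'(m)=\operatorname{Cov}_{\nu_m}\!\big(s,\ \partial_m\log w(m,s)\big),\qquad
 \partial_m\log w(m,s)=\frac{2(\alpha-1)}{m-s}-\frac{2(\beta-1)(m-s)}{1-(m-s)^2}.
\]
Here the hypothesis $\alpha>\tfrac12$ enters decisively: the singular factor $[(m-s)^2]^{\alpha-1}$ is $\mathcal{S}$-integrable exactly when $2(\alpha-1)>-1$, and the same threshold makes the oddly-singular integrals defining $g'$ convergent, so $g\in C^1$ there; for $\alpha\le\tfrac12$ the weight collapses onto a point mass at $s=m$, $g$ degenerates to the identity, and every point of the support becomes a (non-asymptotically-stable) fixed point, which is why the theorem excludes that range.

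Next, with $u=m-s$ I would split $g'(m)=2(1-\alpha)\operatorname{Cov}_{\nu_m}(u,1/u)+2(\beta-1)\operatorname{Cov}_{\nu_m}\!\big(u,\ u/(1-u^2)\big)$ and estimate the two pieces separately. Since $u\mapsto u/(1-u^2)$ is increasing on $(-1,1)$, a Chebyshev / rearrangement correlation inequality gives that the second covariance is nonnegative, and normalizing it against the variance scale $\operatorname{Var}_{\nu_m}(u)$ shows the $\beta$-term stays below $1$; the $\alpha$-term is damped by the factor $2(1-\alpha)\in[0,1)$, which is where $\alpha>\tfrac12$ is used a second time. Combining the two bounds yields $\big|g'(\tilde m_i)\big|<1$, whence $\big|f_i'(\tilde m_i)\big|\le 1-\sigma_i<1$ when $\sigma_i>0$, and the case $\sigma_i=0$ is covered by the strict inequality for $g'$.

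The hard part is exactly this last bound: the two contributions to $g'$ have competing signs and the estimate is tight — as $\beta\to\infty$ the measure $\nu_m$ concentrates at $s=m$ and $g'(m)\to 1$ — so there is essentially no slack, and pushing it through cleanly is likely to require the same population / data-fitting approximation already invoked for Theorem~\ref{the:the1} (and, for a fully rigorous statement, a mild regularity assumption on $\mathcal{S}$, e.g.\ log-concavity of its density). The decoupling, the large-$n$ reduction, and the linearization itself are all routine. One may finally note that since \eqref{equ:boldx} is linear in $\boldsymbol{m}^{(t)}$ and $\boldsymbol{x}^{(t-1)}$ and $\boldsymbol{\mit\Pi}\boldsymbol{W}$ is a contraction whenever $\max_i\pi_i<1$, asymptotic stability of $\tilde{\boldsymbol{m}}$ propagates to the normal agents and hence to the full Two-Step system.
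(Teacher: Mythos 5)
The paper gives no in-text argument for this theorem (its proof is deferred to supplementary materials not reproduced here), so I can only judge your proposal on its own merits. Your skeleton is the natural one and is surely what any proof must do: by Assumption~3 the leaders decouple, the large-$n$ limit turns each coordinate into the deterministic scalar map $f_i(m)=\sigma_i m_i^{(0)}+(1-\sigma_i)g(m)$ whose fixed point is $\tilde m_i$, and asymptotic stability reduces to $|f_i'(\tilde m_i)|=(1-\sigma_i)|g'(\tilde m_i)|<1$. Your identification of $\alpha>\tfrac12$ as the integrability threshold for the weight $|m-s|^{2(\alpha-1)}$ is a genuinely good observation.

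The gap sits at the one step that carries the theorem: $|g'(\tilde m_i)|<1$ is asserted, not proved, and the strategy you sketch for proving it cannot work as stated. Bounding the two covariance terms \emph{separately} is doomed, because they are not individually small: near a symmetric fixed point $\mathrm{Cov}_{\nu_m}(u,1/u)=1-\mathbb{E}[u]\,\mathbb{E}[1/u]\approx 1$, so the $\alpha$-term alone is already $\approx 2(1-\alpha)$ (e.g.\ $0.8$ for $\alpha=0.6$), while the $\beta$-term tends to $1$ as $\beta\to\infty$; the sum of your two one-sided bounds is then $\approx 1.8$. What actually saves the bound is an exact cancellation: writing $\phi(u)\propto|u|^{2\alpha-2}(1-u^2)^{\beta-1}$ for the tilted density of $u=m-s$, integration by parts of $\int u\,\phi'(u)\,du$ shows that the two covariance terms sum \emph{exactly} to $1$ minus a nonnegative boundary term, which is the identity your argument is missing. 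Two further holes: (i) for $\alpha\in(0.5,1)$ the integrals $\mathbb{E}_{\nu_m}[1/u]$ behave like $\int|u|^{2\alpha-3}du$ near $u=0$ and are not absolutely convergent (that threshold is $\alpha>1$, not $\alpha>\tfrac12$), so your covariance decomposition is only formal and needs the integration-by-parts route to be legitimate at all; and (ii) the lower bound $g'>-1$, which is exactly what matters when $\sigma_i=0$, is never addressed. You candidly flag that ``the hard part is exactly this last bound''; that self-diagnosis is accurate, and as written this is a plan rather than a proof.
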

\begin{proof}
The proof is shown in the supplementary materials.
\end{proof}
Theorem \ref{the:the2} states that when $\alpha\in(0.5,1]$ and $\beta\in(1,+\infty)$, the steady-state opinions of opinion leaders and normal agents are stable. This implies that given initial opinions, the steady-state opinions can be approximately predicted by Theorem \ref{the:the1}.

\subsubsection{Normal agents' opinion}
The steady-state opinions and stability analysis is shown in Theorem \ref{the:the1-2}.
\begin{theorem}[Steady-state normal agents' opinions and stability]\label{the:the1-2}
The steady-state opinions of equation \eqref{equ:boldx} is given by
\begin{equation}
    \tilde{\boldsymbol{x}}=(\boldsymbol{I}-\boldsymbol{\mit\Pi}\boldsymbol{W})^{-1} [\boldsymbol{P}\boldsymbol{x}^{(0)}+\boldsymbol{\mit\Theta}\boldsymbol{U}\tilde{\boldsymbol{m}}].
    \label{equ:x}
\end{equation}
And the steady-state solution is asymptotically stable.
\end{theorem}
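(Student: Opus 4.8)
The plan is to read \eqref{equ:boldx} as a linear difference equation in $\boldsymbol{x}^{(t)}$ whose exogenous ``input'' is the opinion-leader trajectory $\boldsymbol{m}^{(t)}$, and to exploit that, by Theorems \ref{the:the1} and \ref{the:the2}, this input has already been shown to converge to the constant vector $\tilde{\boldsymbol{m}}$. First I would isolate the system matrix $\boldsymbol{\mit\Pi}\boldsymbol{W}$ and show $\rho(\boldsymbol{\mit\Pi}\boldsymbol{W})<1$. Since $\boldsymbol{W}$ is row-stochastic, $\Vert\boldsymbol{W}\Vert_\infty=1$; because $\rho_i+\pi_i+\theta_i=1$ with $\rho_i,\theta_i\geqslant 0$, each $\pi_i\leqslant 1$, and under the mild non-degeneracy condition inherited from the FJ model that every normal agent keeps some stubbornness or listens to at least one opinion leader — i.e.\ $\pi_i<1$ for all $i$ — one gets $\Vert\boldsymbol{\mit\Pi}\boldsymbol{W}\Vert_\infty=\max_i\pi_i<1$, hence $\rho(\boldsymbol{\mit\Pi}\boldsymbol{W})\leqslant\Vert\boldsymbol{\mit\Pi}\boldsymbol{W}\Vert_\infty<1$. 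In particular $\boldsymbol{I}-\boldsymbol{\mit\Pi}\boldsymbol{W}$ is invertible (Neumann series), so the fixed-point equation $\tilde{\boldsymbol{x}}=\boldsymbol{P}\boldsymbol{x}^{(0)}+\boldsymbol{\mit\Pi}\boldsymbol{W}\tilde{\boldsymbol{x}}+\boldsymbol{\mit\Theta}\boldsymbol{U}\tilde{\boldsymbol{m}}$, obtained by substituting $\boldsymbol{x}^{(t)}=\boldsymbol{x}^{(t-1)}=\tilde{\boldsymbol{x}}$ and $\boldsymbol{m}^{(t)}=\tilde{\boldsymbol{m}}$ into \eqref{equ:boldx}, has the unique solution \eqref{equ:x}.

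Next I would establish that $\boldsymbol{x}^{(t)}\to\tilde{\boldsymbol{x}}$. Setting $\boldsymbol{e}^{(t)}=\boldsymbol{x}^{(t)}-\tilde{\boldsymbol{x}}$ and subtracting the fixed-point identity from \eqref{equ:boldx} yields the driven recursion
\[
\boldsymbol{e}^{(t)}=\boldsymbol{\mit\Pi}\boldsymbol{W}\,\boldsymbol{e}^{(t-1)}+\boldsymbol{\mit\Theta}\boldsymbol{U}\bigl(\boldsymbol{m}^{(t)}-\tilde{\boldsymbol{m}}\bigr),
\]
whose solution is $\boldsymbol{e}^{(t)}=(\boldsymbol{\mit\Pi}\boldsymbol{W})^{t}\boldsymbol{e}^{(0)}+\sum_{k=1}^{t}(\boldsymbol{\mit\Pi}\boldsymbol{W})^{t-k}\boldsymbol{\mit\Theta}\boldsymbol{U}\bigl(\boldsymbol{m}^{(k)}-\tilde{\boldsymbol{m}}\bigr)$. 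The first term vanishes because $\rho(\boldsymbol{\mit\Pi}\boldsymbol{W})<1$. For the convolution term I would invoke the standard fact that a discrete-time linear system with spectral radius below one, driven by an input tending to zero, has output tending to zero: by Gelfand's formula pick $r\in(\rho(\boldsymbol{\mit\Pi}\boldsymbol{W}),1)$ and $C>0$ with $\Vert(\boldsymbol{\mit\Pi}\boldsymbol{W})^{j}\Vert\leqslant Cr^{j}$, split the sum at $k=\lfloor t/2\rfloor$ (the early terms are killed by $r^{t-k}\leqslant r^{t/2}$, the late terms by $\sup_{k>t/2}\Vert\boldsymbol{m}^{(k)}-\tilde{\boldsymbol{m}}\Vert\to 0$, which holds by Theorems \ref{the:the1}--\ref{the:the2}), and conclude $\boldsymbol{e}^{(t)}\to\boldsymbol{0}$, i.e.\ $\boldsymbol{x}^{(t)}\to\tilde{\boldsymbol{x}}$.

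Finally, for asymptotic stability I would note that once $\boldsymbol{m}^{(t)}$ has settled at $\tilde{\boldsymbol{m}}$ the update $\boldsymbol{x}^{(t-1)}\mapsto\boldsymbol{x}^{(t)}$ in \eqref{equ:boldx} is affine with linear part $\boldsymbol{\mit\Pi}\boldsymbol{W}$; since $\rho(\boldsymbol{\mit\Pi}\boldsymbol{W})<1$ its unique fixed point $\tilde{\boldsymbol{x}}$ is globally asymptotically stable, and the transient of $\boldsymbol{m}^{(t)}$ (itself asymptotically stable by Theorem \ref{the:the2}) contributes only the vanishing perturbation estimated above, so the cascade is asymptotically stable. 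The main obstacle is precisely the spectral-radius bound $\rho(\boldsymbol{\mit\Pi}\boldsymbol{W})<1$ (equivalently, invertibility of $\boldsymbol{I}-\boldsymbol{\mit\Pi}\boldsymbol{W}$): it is immediate from the row-stochasticity of $\boldsymbol{W}$ once all $\pi_i<1$, but if one insists on allowing $\pi_i=1$ for some agents the bound no longer follows from a norm estimate, and one must instead combine the irreducibility of the complete-graph matrix $\boldsymbol{W}$ with the Perron--Frobenius inequality $\rho(\boldsymbol{\mit\Pi}\boldsymbol{W})<\max_i\pi_i$, valid when the row sums of $\boldsymbol{\mit\Pi}\boldsymbol{W}$ are not all equal; everything else is routine linear-systems bookkeeping.
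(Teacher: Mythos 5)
Your proposal is correct and follows the same basic route as the paper: write down the fixed-point equation obtained by setting $\boldsymbol{x}^{(t)}=\boldsymbol{x}^{(t-1)}=\tilde{\boldsymbol{x}}$ and $\boldsymbol{m}^{(t)}=\tilde{\boldsymbol{m}}$, invert $\boldsymbol{I}-\boldsymbol{\mit\Pi}\boldsymbol{W}$, and obtain stability from the contraction property of $\boldsymbol{\mit\Pi}\boldsymbol{W}$. The difference is one of completeness rather than of method: the paper's proof is a two-line sketch that delegates both the invertibility and the asymptotic stability to the Friedkin--Johnsen reference, whereas you supply the missing details --- the norm bound $\rho(\boldsymbol{\mit\Pi}\boldsymbol{W})\leqslant\Vert\boldsymbol{\mit\Pi}\boldsymbol{W}\Vert_\infty=\max_i\pi_i<1$, the Neumann-series invertibility, and, most usefully, the cascade argument showing that an exponentially stable linear recursion driven by an input that merely \emph{converges} to $\tilde{\boldsymbol{m}}$ still has state converging to $\tilde{\boldsymbol{x}}$ (the paper implicitly treats $\boldsymbol{m}^{(t)}$ as already constant). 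You also correctly identify a hypothesis the theorem silently needs: if $\pi_i=1$ for every normal agent (i.e.\ $\rho_i=\theta_i=0$), then $\boldsymbol{\mit\Pi}\boldsymbol{W}=\boldsymbol{W}$ is row-stochastic, $\boldsymbol{I}-\boldsymbol{\mit\Pi}\boldsymbol{W}$ is singular, and \eqref{equ:x} is ill-posed; your Perron--Frobenius observation handles the intermediate case where only some $\pi_i=1$. In short, your argument is a strictly more careful, self-contained version of the paper's proof, not a different one.
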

\begin{proof}
Equation \eqref{equ:boldx} is a linear difference equation. The steady-state equation is given by
\begin{equation*}
    \tilde{\boldsymbol{x}}=\boldsymbol{P}\boldsymbol{x}^{(0)}+\boldsymbol{\mit\Pi}\boldsymbol{W}\tilde{\boldsymbol{x}}+(\boldsymbol{I}-\boldsymbol{P}-\boldsymbol{\mit\Pi})\boldsymbol{U}\tilde{\boldsymbol{m}}.
\end{equation*}
And equation \eqref{equ:x} can be obtained by matrix operations. Following the prior work \cite{friedkin1990social}, we can prove the asymptotical stability of the steady-state solution.
\end{proof}


\subsection{Parameter Analysis}
Following prior work \cite{murray2004consensus, gionis2013opinion, liu2018polarizability, ye2019consensus, cisneros2020polarization, li2021evaluation}, we pay attention to the consensus and polarization attributes of the steady-state opinions, which can be quantified by the sample mean $\bar{\tilde{m}}=\frac{1}{p}\sum_{i=1}^p\tilde{m}_i$ and $\bar{\tilde{x}}=\frac{1}{q}\sum_{i=1}^q\tilde{x}_i$, and the sample variance $\sigma_{\tilde{m}}^2=\frac{1}{p}\sum_{i=1}^p(\tilde{m}_i-\bar{\tilde{m}})^2$ and $\sigma_{\tilde{x}}^2=\frac{1}{q}\sum_{i=1}^q(\tilde{x}_i-\bar{\tilde{x}})^2$, respectively.
To simplify the analysis, we assume that the stubbornness matrix and proportion matrix are scalar matrices, i.e., $\boldsymbol{\mit\Sigma}=\sigma\boldsymbol{I}$, $\boldsymbol{Z}=z\boldsymbol{I}$,  $\boldsymbol{P}=\rho\boldsymbol{I}$, $\boldsymbol{\mit\Pi}=\pi\boldsymbol{I}$, and $\boldsymbol{\mit\Theta}=\theta\boldsymbol{I}$, and the normal agent influence matrix and opinion leader influence matrix are the row-normalized all-one matrices, i.e., $\boldsymbol{W}=q^{-1}\boldsymbol{E}_{q,q}$ and $\boldsymbol{U}=q^{-1}\boldsymbol{E}_{p,q}$. First, we provide the sample mean and variance of steady-state opinions, which is given by Theorem \ref{lem:1}. 

\begin{theorem}[Sample mean and variance of steady-state opinions]\label{lem:1}
The sample mean and variance of the opinion leaders and normal agents' steady-state opinions are given by
\begin{align*}
    &\bar{\tilde{m}}\approx z\bar{m}^{(0)}+(1-z)\mu, &&\sigma^2_{\tilde{m}}\approx z^2\sigma^2_{m^{(0)}},\\
    &\bar{\tilde{x}}=\frac{\rho}{\rho+\theta}\bar{x}^{(0)}+\frac{\theta}{\rho+\theta}\bar{\tilde{m}}, &&\sigma^2_{\tilde{x}}=\rho^2\sigma^2_{x^{(0)}},
\end{align*}
where $\bar{m}^{(0)}$ and $\sigma^2_{m^{(0)}}$ are the sample mean and variance of the opinion leaders' initial opinions, and $\bar{x}^{(0)}$ and $\sigma^2_{x^{(0)}}$ are the sample mean and variance of normal agents.
\end{theorem}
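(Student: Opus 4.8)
The plan is to handle the two groups in sequence, exploiting the fact that the opinion leaders' steady state (Theorem~\ref{the:the1}) enters the normal agents' steady state (Theorem~\ref{the:the1-2}) but not conversely. For the opinion leaders I would simply specialize Theorem~\ref{the:the1} to the scalar case: with $\boldsymbol{\mit\Sigma}=\sigma\boldsymbol{I}$ the modified stubbornness matrix is $\boldsymbol{Z}=z\boldsymbol{I}$, so $\tilde m_i\approx z m_i^{(0)}+(1-z)\mu$ for every $i$; that is, each steady-state opinion is the image of the corresponding initial opinion under the single affine map $y\mapsto zy+(1-z)\mu$. Averaging over $i$ gives $\bar{\tilde m}\approx z\bar m^{(0)}+(1-z)\mu$; subtracting this from $\tilde m_i$ gives $\tilde m_i-\bar{\tilde m}\approx z\bigl(m_i^{(0)}-\bar m^{(0)}\bigr)$, and squaring and averaging gives $\sigma^2_{\tilde m}\approx z^2\sigma^2_{m^{(0)}}$. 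The $\approx$ is inherited from Theorem~\ref{the:the1}; the rest is exact.

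For the normal agents I would substitute the scalar/all-ones assumptions into the closed form $\tilde{\boldsymbol{x}}=(\boldsymbol{I}-\boldsymbol{\mit\Pi}\boldsymbol{W})^{-1}\bigl[\boldsymbol{P}\boldsymbol{x}^{(0)}+\boldsymbol{\mit\Theta}\boldsymbol{U}\tilde{\boldsymbol{m}}\bigr]$ of Theorem~\ref{the:the1-2}. Here $\boldsymbol{\mit\Pi}\boldsymbol{W}=\tfrac{\pi}{q}\boldsymbol{E}_{q,q}$ is rank one, and the one genuinely computational step is to invert $\boldsymbol{I}-\tfrac{\pi}{q}\boldsymbol{E}_{q,q}$: using $\boldsymbol{E}_{q,q}^2=q\boldsymbol{E}_{q,q}$ one checks (equivalently, by Sherman--Morrison on a rank-one update, valid since $\pi<1$) that the inverse is again of the form $\boldsymbol{I}+c\boldsymbol{E}_{q,q}$ with $c=\tfrac{\pi}{q(1-\pi)}$. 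Because $\boldsymbol{U}$ is the row-normalized all-ones matrix, $\boldsymbol{U}\tilde{\boldsymbol{m}}=\bar{\tilde m}\,\boldsymbol{1}$ and hence $\boldsymbol{\mit\Theta}\boldsymbol{U}\tilde{\boldsymbol{m}}=\theta\bar{\tilde m}\,\boldsymbol{1}$, while $\boldsymbol{P}\boldsymbol{x}^{(0)}=\rho\boldsymbol{x}^{(0)}$. Multiplying out and using $\boldsymbol{E}_{q,q}\boldsymbol{x}^{(0)}=q\bar x^{(0)}\boldsymbol{1}$ and $\boldsymbol{E}_{q,q}\boldsymbol{1}=q\boldsymbol{1}$, I would arrive at $\tilde{\boldsymbol{x}}=\rho\boldsymbol{x}^{(0)}+C\boldsymbol{1}$, where $C$ is a scalar (independent of the coordinate index $i$) collecting all the $\boldsymbol{1}$-terms.

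From $\tilde{\boldsymbol{x}}=\rho\boldsymbol{x}^{(0)}+C\boldsymbol{1}$ the conclusions follow directly. The sample mean is $\bar{\tilde x}=\rho\bar x^{(0)}+C$; evaluating $C$ explicitly and then substituting the constraint $\rho+\pi+\theta=1$ (so that $1-\pi=\rho+\theta$) collapses the prefactor to $\tfrac{1}{\rho+\theta}$ and yields $\bar{\tilde x}=\tfrac{\rho}{\rho+\theta}\bar x^{(0)}+\tfrac{\theta}{\rho+\theta}\bar{\tilde m}$. For the variance, the term $C\boldsymbol{1}$ contributes equally to every coordinate and hence cancels in the deviations, $\tilde x_i-\bar{\tilde x}=\rho\bigl(x_i^{(0)}-\bar x^{(0)}\bigr)$, so $\sigma^2_{\tilde x}=\rho^2\sigma^2_{x^{(0)}}$.

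I do not anticipate a real obstacle: the argument is linear-algebraic once Theorems~\ref{the:the1} and~\ref{the:the1-2} are in hand. The only places that require care are the rank-one inversion of $\boldsymbol{I}-\boldsymbol{\mit\Pi}\boldsymbol{W}$ and the bookkeeping of the various $\boldsymbol{1}$-terms, so that the normalization $\rho+\pi+\theta=1$ is invoked at exactly the right moment to produce the stated convex combination; one should also note explicitly that the opinion-leader identities hold only approximately, carrying the $\approx$ from Theorem~\ref{the:the1} into $\bar{\tilde m}$, $\sigma^2_{\tilde m}$, and thereby into $\bar{\tilde x}$.
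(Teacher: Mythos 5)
Your proposal is correct and follows essentially the same route as the paper: specialize Theorem~\ref{the:the1} to the scalar case for the opinion leaders, then invert the rank-one perturbation $\boldsymbol{I}-\boldsymbol{\mit\Pi}\boldsymbol{W}$ (the paper verifies $(\boldsymbol{I}-\pi\boldsymbol{W})^{-1}=\boldsymbol{I}+\tfrac{\pi}{1-\pi}\boldsymbol{W}$ directly via $\boldsymbol{W}^2=\boldsymbol{W}$, which is your $\boldsymbol{E}_{q,q}^2=q\boldsymbol{E}_{q,q}$ computation in different clothing), substitute, and use $1-\pi=\rho+\theta$ to collapse the coefficients. The bookkeeping of the constant $\boldsymbol{1}$-terms and the cancellation in the variance match the paper's calculation exactly.
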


\begin{proof}
1) Sample mean and variance of opinion leaders' steady-state opinions. According to Theorem \ref{the:the1}, opinion leader $i$'s steady-state opinion is approximately given by
\begin{equation*}
    \tilde{m}_i\approx zm_i^{(0)}+(1-z)\mu.
\end{equation*}
And the sample mean and variance of opinion leaders' steady-state opinions can be easily calculated.

2) Sample mean and variance of normal agents' steady-state opinions. When $\boldsymbol{\mit\Pi}=\pi\boldsymbol{I}$ and $\boldsymbol{W}=\frac{1}{q}\boldsymbol{E}_{q,q}$, we have $(\boldsymbol{I}-\pi\boldsymbol{W})^{-1}=\boldsymbol{I}+\frac{\pi}{1-\pi}\boldsymbol{W}$, because
\begin{align*}
    (\boldsymbol{I}-\pi\boldsymbol{W})\big(\boldsymbol{I}+\frac{\pi}{1-\pi}\boldsymbol{W}\big)&=\boldsymbol{I}+\frac{\pi^2}{1-\pi}\boldsymbol{W}-\frac{\pi^2}{1-\pi}\boldsymbol{W}^2\\
    &=\boldsymbol{I}+\frac{\pi^2}{1-\pi}\boldsymbol{W}-\frac{\pi^2}{1-\pi}\boldsymbol{W}=\boldsymbol{I}.
\end{align*}
According to Theorem \ref{the:the1-2}, the steady-state opinions can be expressed as
\begin{equation*}
    \tilde{\boldsymbol{x}}=\left(\boldsymbol{I}+\frac{\pi}{1-\pi}\boldsymbol{W}\right)[\rho \boldsymbol{x}^{(0)}+\theta\boldsymbol{U}\tilde{\boldsymbol{m}}].
\end{equation*}
Therefore, normal agent $i$'s steady-state opinion is given by
\begin{align*}
    \tilde{x}_i&=\rho x_i^{(0)}+\theta\bar{\tilde{m}}+\frac{\pi\rho}{1-\pi}\bar{x}^{(0)}+\frac{\pi\theta}{1-\pi}\bar{\tilde{m}}\\
    &=\rho x_i^{(0)}+\frac{(1-\rho-\theta)\rho}{\rho+\theta}\bar{x}^{(0)}+\frac{\theta}{\rho+\theta}\bar{\tilde{m}}.
\end{align*}
And the sample mean and variance of normal agents' steady-state opinions can be calculated.
\end{proof}

Next, we analyze how the steady-state opinions in \eqref{equ:boldm} and \eqref{equ:boldx} is affected by the model parameters including:
\begin{enumerate}
    \item Message distribution: $\mathcal{S}$.
    \item Initial opinion: $\boldsymbol{m}^{(0)}$ and $\boldsymbol{x}^{(0)}$.
    \item Level of stubbornness: $\boldsymbol{\mit\Sigma}$ and $\boldsymbol{P}$.
    \item Preference coefficient: $\alpha$ and $\beta$.
\end{enumerate}

Literature \cite{friedkin1990social} has analyzed the influence matrix's impact on steady-state opinion. And the analysis of the proportion matrix's impact is similar to that of the stubbornness matrix. Therefore, we omit the analysis of these parameters. According to Theorem \ref{lem:1}, we can analyze the influence of model parameters and obtain the following corollaries.

\begin{corollary}[Message distribution]
The sample mean of the opinion leaders and normal agents' steady-state opinions is approximately a nondecreasing function of the expectation of the message distribution.
\label{the:the3}
\end{corollary}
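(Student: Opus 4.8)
The plan is to read the two sample means directly off Theorem~\ref{lem:1}, treat each as an explicit function of the message expectation $\mu$, and establish monotonicity by inspecting a single coefficient in each case; no new computation beyond a sign check is needed.

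For the opinion leaders, Theorem~\ref{lem:1} gives $\bar{\tilde{m}}\approx z\,\bar{m}^{(0)}+(1-z)\mu$, where the modified level of stubbornness $z=\sigma^{(\lambda\ln\alpha+1)/(\kappa(\beta-1)+1)}$ depends only on $\sigma,\alpha,\beta$ and is independent of $\mu$. Hence $\partial\bar{\tilde{m}}/\partial\mu\approx 1-z$, so it suffices to show $0\le z\le 1$. Since $\sigma\in[0,1]$, this reduces to showing the exponent $\frac{\lambda\ln\alpha+1}{\kappa(\beta-1)+1}$ is nonnegative: the denominator is $>1>0$ for every $\beta>1$, and the numerator $\lambda\ln\alpha+1$ is nondecreasing in $\alpha$ and nonnegative on the stability range $\alpha\in(0.5,1]$ of Theorem~\ref{the:the2} (with $\lambda\approx1.15$ one has $\lambda\ln 0.5+1\approx 0.20>0$, rising to $1$ at $\alpha=1$). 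Therefore $z\in[0,1]$, $1-z\ge 0$, and $\bar{\tilde{m}}$ is (approximately) an affine function of $\mu$ with nonnegative slope, hence nondecreasing.

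For the normal agents, Theorem~\ref{lem:1} gives $\bar{\tilde{x}}=\frac{\rho}{\rho+\theta}\bar{x}^{(0)}+\frac{\theta}{\rho+\theta}\bar{\tilde{m}}$, whose only dependence on $\mu$ enters through $\bar{\tilde{m}}$. Because $\rho,\theta\ge 0$ and $\rho+\theta>0$, the coefficient $\frac{\theta}{\rho+\theta}\in[0,1]$ is nonnegative, so composing with the previous step yields $\partial\bar{\tilde{x}}/\partial\mu\approx\frac{\theta}{\rho+\theta}(1-z)\ge 0$; thus $\bar{\tilde{x}}$ is also (approximately) nondecreasing in $\mu$.

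The delicate points are threefold. First, verifying $z\le 1$ is the only place a genuine inequality rather than a trivial sign check is required, and it is exactly where the restriction $\alpha\in(0.5,1]$ (as in Theorem~\ref{the:the2}) is used — outside this range $\lambda\ln\alpha+1$ can become negative and $z$ can exceed $1$, so the statement would need a separate argument there; I expect this to be the main obstacle. Second, the degenerate cases $\theta=0$ (then $\bar{\tilde{x}}$ does not depend on $\mu$ and the claim holds trivially) and $\rho+\theta=0$, i.e.\ $\pi=1$ (where the expression from Theorem~\ref{lem:1} must be read as the limiting consensus value $\bar{\tilde{x}}=\bar{\tilde{m}}$, again nondecreasing in $\mu$) should be noted. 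Third, the qualifier ``approximately'' is inherited verbatim from the data-fitting approximation in Theorem~\ref{the:the1} and is not sharpened here.
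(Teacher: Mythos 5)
Your proposal is correct and follows the same route as the paper, which simply reads the corollary off Theorem~\ref{lem:1} and declares it ``clearly established.'' The one substantive thing you add is the actual sign check $z\le 1$, and your observation there is worth keeping: the nonnegativity of the slope $1-z$ genuinely requires the numerator $\lambda\ln\alpha+1$ to be nonnegative, which holds on the stability range $\alpha\in(0.5,1]$ of Theorem~\ref{the:the2} but can fail for sufficiently small $\alpha\le 1$ (where $z=\sigma^{(\lambda\ln\alpha+1)/(\kappa(\beta-1)+1)}$ exceeds $1$ and the monotonicity reverses), a restriction the paper's one-line proof leaves implicit.
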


\begin{proof}
According to Theorem \ref{lem:1}, it is clearly established.
\end{proof}

Corollary \ref{the:the3} demonstrates a positive relationship between the sample mean of the steady-state opinions and the expectation of the message distribution. This finding has significant implications in mass media theory, as it suggests that the mean of the messages transmitted by the media can directly influence the steady-state opinions of both opinion leaders and normal agents.
This finding underscores the importance of understanding the role of the message sources in shaping public opinion and highlights the need for careful consideration of the messages transmitted by the sources in the context of opinion formation and evolution.


\begin{corollary}[Initial opinion]
The sample mean and variance of the opinion leaders and normal agents' steady-state opinions are nondecreasing linear functions of the sample mean and variance of initial opinions.
\label{the:the7}
\end{corollary}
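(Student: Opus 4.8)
The plan is to read everything off Theorem \ref{lem:1}, which already expresses each steady-state statistic as an affine function of the corresponding statistic of the initial opinions; the only work is to check that the relevant coefficients are nonnegative so that ``affine'' upgrades to ``nondecreasing''.

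First I would treat the opinion leaders. Theorem \ref{lem:1} gives $\bar{\tilde{m}}\approx z\bar{m}^{(0)}+(1-z)\mu$ and $\sigma^2_{\tilde{m}}\approx z^2\sigma^2_{m^{(0)}}$, where $z$ is the modified level of stubbornness $z=\sigma^{(\lambda\ln\alpha+1)/(\kappa(\beta-1)+1)}$. Since $\sigma\in[0,1]$ and $z$ is a real power of a nonnegative number, $z\geqslant 0$; hence $\bar{m}^{(0)}\mapsto\bar{\tilde{m}}$ is affine with nonnegative slope $z$, and $\sigma^2_{m^{(0)}}\mapsto\sigma^2_{\tilde{m}}$ is linear with nonnegative slope $z^2$. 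Both maps are therefore nondecreasing linear (affine) functions, which is the claim for opinion leaders. Next I would substitute $\bar{\tilde{m}}$ into the normal-agent formula $\bar{\tilde{x}}=\frac{\rho}{\rho+\theta}\bar{x}^{(0)}+\frac{\theta}{\rho+\theta}\bar{\tilde{m}}$ from Theorem \ref{lem:1}, obtaining $\bar{\tilde{x}}=\frac{\rho}{\rho+\theta}\bar{x}^{(0)}+\frac{\theta z}{\rho+\theta}\bar{m}^{(0)}+\mathrm{const}$; because $\rho,\theta\in[0,1]$ with $\rho+\theta>0$ and $z\geqslant 0$, the coefficients $\frac{\rho}{\rho+\theta}$ and $\frac{\theta z}{\rho+\theta}$ are nonnegative, so $\bar{\tilde{x}}$ is a nondecreasing affine function of the pair $(\bar{x}^{(0)},\bar{m}^{(0)})$. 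Finally, Theorem \ref{lem:1} gives $\sigma^2_{\tilde{x}}=\rho^2\sigma^2_{x^{(0)}}$, which is linear in $\sigma^2_{x^{(0)}}$ with nonnegative slope $\rho^2$ and constant (hence trivially nondecreasing) in $\sigma^2_{m^{(0)}}$, completing the argument.

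There is essentially no hard step: the corollary is a bookkeeping consequence of Theorem \ref{lem:1}. The only points requiring a moment of care are the well-definedness and signs of the coefficients --- in particular that $z\geqslant 0$ (immediate from $\sigma\geqslant 0$) and that $\rho+\theta>0$, which is needed for the normal-agent expression in Theorem \ref{lem:1} to make sense. If one additionally wants $z\leqslant 1$, so that $\bar{\tilde{m}}$ is a genuine convex combination of $\bar{m}^{(0)}$ and $\mu$, it suffices to work in the stability regime $\alpha\in(0.5,1]$, $\beta\in(1,+\infty)$ of Theorem \ref{the:the2}, where the exponent $(\lambda\ln\alpha+1)/(\kappa(\beta-1)+1)$ lies in $(0,1]$; but this is not needed for the monotonicity conclusion itself.
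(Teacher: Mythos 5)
Your proposal is correct and follows exactly the route the paper takes: the paper's own proof is the one-line remark that the corollary follows from Theorem \ref{lem:1}, and your argument simply spells out the sign checks on the coefficients $z$, $z^2$, $\rho/(\rho+\theta)$, $\theta z/(\rho+\theta)$, and $\rho^2$ that make ``affine'' into ``nondecreasing''. No further changes are needed.
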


\begin{proof}
According to Theorem \ref{lem:1}, it is clearly established.
\end{proof}

Corollary \ref{the:the7} demonstrates a positive relationship between the sample mean and variance of the steady-state opinions and the sample mean and variance of the initial opinions. From the perspective of mass media theory, initial opinions correspond to the intrinsic beliefs \cite{schram1982men, friedkin1990social}, of which the sample mean and variance promotes those of the steady-state opinion in the same direction.

\begin{corollary}[Level of stubbornness]~
\label{the:the5}
\begin{enumerate}
    \item The sample mean and variance of opinion leaders' steady-state opinions are approximately power functions of the level of stubbornness, i.e.,
    \begin{equation*}
        \bar{\tilde{m}}\approx[\bar{m}^{(0)}-\mu]\sigma^{\frac{\lambda \ln{\alpha}+1}{\kappa(\beta-1)+1}}+\mu,\sigma^2_{\tilde{m}}\approx\sigma^2_{m^{(0)}}\sigma^{\frac{2(\lambda \ln{\alpha}+1)}{\kappa(\beta-1)+1}}.
    \end{equation*}
    \item The sample mean and variance of normal agents' steady-state opinions are fractional function and quadratic function of the level of stubbornness, respectively, i.e.,
    \begin{equation*}
        \bar{\tilde{x}}=\frac{\rho\bar{x}^{(0)}+\theta\bar{\tilde{m}}}{\rho+\theta}, \sigma^2_{\tilde{x}}=\rho^2\sigma^2_{x^{(0)}}.
    \end{equation*}
\end{enumerate}
\end{corollary}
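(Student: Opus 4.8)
The plan is to obtain both claims as immediate specializations of Theorem~\ref{lem:1}, once the scalar reductions $\boldsymbol{\mit\Sigma}=\sigma\boldsymbol{I}$, $\boldsymbol{Z}=z\boldsymbol{I}$, $\boldsymbol{P}=\rho\boldsymbol{I}$, $\boldsymbol{\mit\Pi}=\pi\boldsymbol{I}$, $\boldsymbol{\mit\Theta}=\theta\boldsymbol{I}$ are in force and the explicit form of $z$ from Theorem~\ref{the:the1} is substituted. No new estimates are required; the whole argument is algebraic rewriting of the formulas already established.

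\textbf{Opinion leaders.} First I would recall from Theorem~\ref{lem:1} that $\bar{\tilde{m}}\approx z\bar{m}^{(0)}+(1-z)\mu$ and $\sigma^2_{\tilde{m}}\approx z^2\sigma^2_{m^{(0)}}$, where $\boldsymbol{Z}=\boldsymbol{\mit\Sigma}^{(\lambda\ln\alpha+1)/(\kappa(\beta-1)+1)}$ reduces, under $\boldsymbol{\mit\Sigma}=\sigma\boldsymbol{I}$, to the scalar $z=\sigma^{(\lambda\ln\alpha+1)/(\kappa(\beta-1)+1)}$. Writing the mean in the affine form $\bar{\tilde{m}}\approx[\bar{m}^{(0)}-\mu]z+\mu$ and inserting the expression for $z$ yields the stated power function of $\sigma$, while inserting $z$ into $z^2\sigma^2_{m^{(0)}}$ gives $\sigma^2_{m^{(0)}}\sigma^{2(\lambda\ln\alpha+1)/(\kappa(\beta-1)+1)}$. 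This proves item~1.

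\textbf{Normal agents.} Next I would take from Theorem~\ref{lem:1} the expressions $\bar{\tilde{x}}=\frac{\rho}{\rho+\theta}\bar{x}^{(0)}+\frac{\theta}{\rho+\theta}\bar{\tilde{m}}$ and $\sigma^2_{\tilde{x}}=\rho^2\sigma^2_{x^{(0)}}$. Placing the two terms of the mean over the common denominator $\rho+\theta$ gives $\bar{\tilde{x}}=\frac{\rho\bar{x}^{(0)}+\theta\bar{\tilde{m}}}{\rho+\theta}$; with $\theta$, $\bar{x}^{(0)}$ and $\bar{\tilde{m}}$ regarded as fixed, numerator and denominator are both affine in $\rho$, so this is a fractional linear function of the normal agents' level of stubbornness $\rho$, whereas $\rho^2\sigma^2_{x^{(0)}}$ is patently quadratic in $\rho$. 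This proves item~2.

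Since everything reduces to substitution, I do not anticipate a genuine obstacle. The only subtlety worth flagging is bookkeeping about which quantities are held constant: when describing $\bar{\tilde{x}}$ as a function of $\rho$ one must keep $\bar{\tilde{m}}$ (hence $\sigma$, $\alpha$, $\beta$) and $\theta$ fixed, and when describing $\bar{\tilde{m}}$ and $\sigma^2_{\tilde{m}}$ as power functions of $\sigma$ one must keep the exponent $(\lambda\ln\alpha+1)/(\kappa(\beta-1)+1)$ fixed, which is legitimate because the constants $\lambda,\kappa$ of Theorem~\ref{the:the1} are independent of $\alpha$, $\beta$ and $\boldsymbol{\mit\Sigma}$.
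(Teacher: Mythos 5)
Your proposal is correct and follows exactly the paper's route: the paper proves this corollary simply by citing Theorem~\ref{lem:1}, and your substitution of $z=\sigma^{(\lambda\ln\alpha+1)/(\kappa(\beta-1)+1)}$ into its formulas, together with the common-denominator rewriting for $\bar{\tilde{x}}$, is precisely the algebra the paper leaves implicit. Your remark about which parameters are held fixed is a sensible clarification but does not change the argument.
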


\begin{proof}
According to Theorem \ref{lem:1}, it is established.
\end{proof}

According to Corollary \ref{the:the5}, as the level of stubbornness increases, the sample mean of opinion leaders' steady-state opinions shifts from the expectation of message distribution to the sample mean of initial opinions. Likewise, the sample mean of normal agents' steady-state opinions shifts from the sample mean of opinion leaders' steady-state opinions to the sample mean of initial opinions. This suggests that the level of stubbornness influences the strength of intrinsic beliefs. 
Corollary \ref{the:the5} also shows that a greater level of stubbornness induces a greater sample variance of steady-state opinions. From the perspective of the mass media theory, when the level of stubbornness is high, agents become less likely to be influenced by other messages or opinions and more likely to maintain their intrinsic opinions. This leads to a broader range of steady-state opinions, as agents are less likely to adopt a common opinion.

\begin{corollary}[Preference coefficient]~
\begin{enumerate}
    \item When $\sigma=0$ or $\sigma=1$, the sample variance of opinion leaders' steady-state opinions is not affected by the preference coefficients.
    \item When $0<\sigma<1$, the sample variance of opinion leaders' steady-state opinions decreases with $\beta$ and increases with $\alpha$.
\end{enumerate}
\label{the:the6}
\end{corollary}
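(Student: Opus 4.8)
The plan is to read both parts directly off the closed form already established in Corollary~\ref{the:the5} (equivalently Theorem~\ref{lem:1}),
\begin{equation*}
\sigma^2_{\tilde{m}}\approx\sigma^2_{m^{(0)}}\,\sigma^{2r(\alpha,\beta)},\qquad r(\alpha,\beta)=\frac{\lambda\ln\alpha+1}{\kappa(\beta-1)+1},
\end{equation*}
and to view $\sigma^2_{\tilde{m}}$ as the composition of the elementary map $r\mapsto\sigma^{2r}$ with the exponent $r(\alpha,\beta)$. Nothing beyond this is needed; the work lies entirely in tracking monotonicity and signs.

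For part~1 I would substitute the two degenerate values of $\sigma$. When $\sigma=1$ we have $\sigma^{2r}=1$ for every exponent, so $\sigma^2_{\tilde{m}}\approx\sigma^2_{m^{(0)}}$, which contains neither $\alpha$ nor $\beta$; equivalently, every opinion leader is fully stubborn, $m_i^{(t)}\equiv m_i^{(0)}$, and the selective coefficients never enter the steady state. When $\sigma=0$ the exponent $r$ is positive on the relevant range (e.g.\ the stability range of Theorem~\ref{the:the2}), so $\sigma^{2r}=0$ and $\sigma^2_{\tilde{m}}\approx0$; equivalently, every non-stubborn leader solves the same fixed-point equation, so all $\tilde{m}_i$ coincide and the sample variance is zero regardless of the preference coefficients. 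In both degenerate cases the dependence on $(\alpha,\beta)$ collapses, which is the assertion.

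For part~2, assume $0<\sigma<1$, hence $\ln\sigma<0$. Taking logarithms, $\ln\sigma^2_{\tilde{m}}\approx\ln\sigma^2_{m^{(0)}}+2r(\alpha,\beta)\ln\sigma$, so $\sigma^2_{\tilde{m}}$ is a strictly decreasing function of the exponent $r$. It then remains to see how $r$ moves with the preference coefficients: $\partial r/\partial\beta$ has the sign of $-(\lambda\ln\alpha+1)$ because $\kappa>0$ and the denominator is positive, while $\partial r/\partial\alpha=\frac{\lambda}{\alpha[\kappa(\beta-1)+1]}>0$. Chaining these through the decreasing map $r\mapsto\sigma^{2r}$ then determines the monotonicity of $\sigma^2_{\tilde{m}}$ in $\beta$ and in $\alpha$; a compact way to present the whole step is to differentiate $\ln\sigma^2_{\tilde{m}}$ directly and read off the signs.

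The main obstacle is the sign bookkeeping in part~2, not any computation: one has to juggle simultaneously the sign of $\ln\sigma$ (negative because $\sigma\in(0,1)$), the sign of the numerator $\lambda\ln\alpha+1$ (positive on the stability range $\alpha\in(0.5,1]$, but changing sign at $\alpha=e^{-1/\lambda}$, so the intended domain of $\alpha$ should be fixed in advance), and the direction in which the exponent $r$ shifts. A secondary caveat worth stating is that the expression for $\sigma^2_{\tilde{m}}$ is the curve-fitted approximation of Theorem~\ref{the:the1}, so the monotonicity statements are properly claims about this approximating function, consistent with the ``$\approx$'' already used; one should also check continuity with part~1 as $\sigma\to0^{+}$ and $\sigma\to1^{-}$.
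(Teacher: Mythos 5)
Your approach is essentially the paper's: both proofs read the result off the closed form $\sigma^2_{\tilde m}\approx z^2\sigma^2_{m^{(0)}}$ with $z=\sigma^{r(\alpha,\beta)}$, dispose of part~1 by noting $z\in\{0,1\}$ is independent of $(\alpha,\beta)$, and handle part~2 by differentiating and tracking the sign of $\ln\sigma$; the paper differentiates $z$ directly, you factor through the exponent $r$, which is only a cosmetic difference. Your extra care in part~1 (requiring $r>0$ so that $0^{r}=0$ is well defined, which holds on the stability range $\alpha\in(0.5,1]$) is a legitimate refinement the paper omits.

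One substantive point: you stop at ``chaining these \dots determines the monotonicity'' without stating the resulting directions. If you carry the chain through ($\partial r/\partial\alpha>0$, $\partial r/\partial\beta<0$, and $r\mapsto\sigma^{2r}$ decreasing for $\sigma\in(0,1)$), you get that $\sigma^2_{\tilde m}$ \emph{decreases} with $\alpha$ and \emph{increases} with $\beta$ --- the opposite of the corollary's literal wording. This is not an error in your argument: the paper's own proof reaches the same conclusion (``the larger $\alpha$ and the smaller $\beta$, the smaller $z$, and the smaller the sample variance''), as do the subsequent discussion and the numerical results in Figs.~\ref{fig:fig4c} and \ref{fig:fig4d}. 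The directions in the statement of Corollary~\ref{the:the6} appear to be transposed, and a complete write-up should state the signs explicitly rather than leave them implicit, since as worded the statement and its proof disagree.
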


\begin{proof}
If $\sigma=0$ or $\sigma=1$, then $z=0$ or $z=1$, which is independent with $\alpha$ and $\beta$. According to Theorem \ref{lem:1}, the preference coefficients will not affect the sample variance of opinion leaders' steady-state opinions. 
When $0<\sigma<1$, the partial derivatives of $z$ over $\alpha$ and $\beta$ are
$\frac{\partial z}{\partial\alpha}=\frac{\lambda z\ln{\sigma}}{\alpha[\kappa(\beta-1)+1]}<0$ and
$\frac{\partial z}{\partial\beta}=-\frac{(\lambda\ln{\alpha}+1)\kappa z\ln{\sigma}}{[\kappa(\beta-1)+1]^2}>0$. Therefore, the larger $\alpha$ and the smaller $\beta$, the smaller $z$, and the smaller the sample variance of opinion leaders' steady-state opinions, according to Theorem \ref{lem:1}.
\end{proof}

Corollary \ref{the:the6} shows that increasing $\alpha$ or decreasing $\beta$ will reduce the sample variance of opinion leaders' steady-state opinions, which supports the selective exposure feature in the communication process. The larger $\alpha$ and smaller $\beta$, the greater the preference for receiving messages far from their intrinsic opinions, and the wider the horizon to get messages. Specifically, when $\alpha$ is large, opinion leaders are less likely to adhere to their intrinsic opinions; when $\beta$ is small, opinion leaders are more likely to expose themselves to messages that challenge their intrinsic opinions, leading to a wider horizon for receiving messages. This behavior, in turn, reduces the sample variance of steady-state opinions.

In summary, the sample mean of opinion leaders and normal agents' steady-state opinions $\bar{\tilde{m}}$ and $\bar{\tilde{x}}$ are influenced by the following three factors:
\begin{enumerate}
    \item \textbf{Message distribution}. The sample mean of steady-state opinions is an increasing linear function of the expectation of message distribution $\mu$.
    \item \textbf{Initial opinion}. The sample mean of steady-state opinions is an increasing linear function of the sample mean of initial opinions $\bar{m}^{(0)}$ and $\bar{x}^{(0)}$.
    \item \textbf{Level of stubbornness}. As level of stubbornness $\sigma$ and $\rho$ increase, the sample mean of steady-state opinions shifts from the expectation of message distribution $\mu$ or the sample mean of opinion leaders' steady-state opinions $\bar{\tilde{m}}$ to the sample mean of initial opinions $\bar{m}^{(0)}$ and $\bar{x}^{(0)}$. 
\end{enumerate}

The sample variance of steady-state opinions $\sigma^2_{\tilde{m}}$ and $\sigma^2_{\tilde{x}}$ are influenced by three factors as well:
\begin{enumerate}
    \item \textbf{Initial opinion}. The sample variance of steady-state opinions is an increasing linear function of the sample variance of initial opinions $\sigma^2_{m^{(0)}}$ and $\sigma^2_{x^{(0)}}$.
    \item \textbf{Level of stubbornness}. The sample variance of steady-state opinions is an increasing power or quadratic function of the sample variance of the level of stubbornness $\sigma$ and $\rho$.
    \item \textbf{Preference coefficient}. The sample variance of opinion leaders' steady-state opinions decreases with $\beta$ and increases with $\alpha$ when $0<\sigma<1$.
\end{enumerate}

\section{Numerical Experiments}
\label{sec:experiments}
In this section, we run simulations over synthetic networks to validate our Two-Step Model and discuss how opinions are affected by the model parameters.

The simulation network that we generate consists of $10^4$ message sources, $10^3$ opinion leaders, and $10^3$ normal agents. We create random matrices and normalize them row by row to generate the influence matrices $\boldsymbol{W}$ and $\boldsymbol{U}$. For opinion leaders, we generate a diagonal stubbornness matrix $\boldsymbol{\mit\Sigma}$, where the diagonal elements are from a uniform distribution on the interval $[0, 1]$. Similarly, we generate three random diagonal matrices for normal agents: the stubbornness matrix $\boldsymbol{P}$, the internal influence proportion matrix $\boldsymbol{\mit\Pi}$, and the external influence proportion matrix $\boldsymbol{C}$, and ensure that the sum of these three matrices equals the identity matrix.
We use the Beta distribution, denoted by $\mathcal{B}(a,b)$, to represent the message distribution, and we assume that the distributions of initial opinions of opinion leaders and normal agents are also Beta distributions, denoted by $\mathcal{B}(a_m,b_m)$ and $\mathcal{B}(a_x,b_x)$. The preference coefficients for opinion leaders are set to $\alpha\in(0.5,1]$ and $\beta\in(1,5)$.  Experimental results indicate that agents' opinions almost reach a steady state when $t\geqslant10^2$. Therefore, we choose $T=10^2$ as the simulation time length.

\subsection{Steady-state Analysis}

\begin{figure}[!t]
\centering
\subfloat[Correlation coefficient]{\includegraphics[width=0.5\columnwidth]{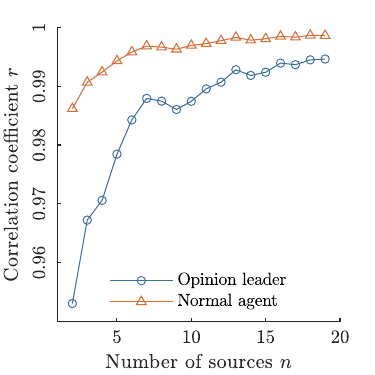} \label{fig:fig2a}}
\subfloat[Number of sources $n=10^3$]{\includegraphics[width=0.5\columnwidth]{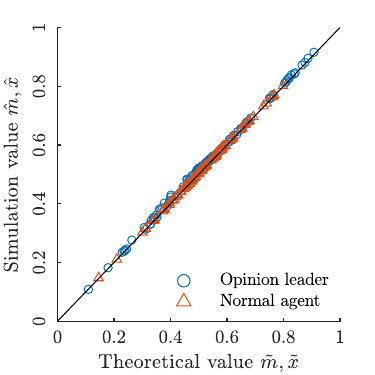} \label{fig:fig2b}}
\caption{Numerical experiment results: the steady-state analysis.}
\label{fig:fig2}
\end{figure}

We assume that the number of sources $n$ is sufficiently large in Theorem \ref{the:the1}. To test the correlation between the simulation values $\tilde{m}$, $\tilde{x}$, and theoretical values $\hat{m}$, $\hat{x}$, we vary $n$ from $2$ to $20$. As shown in Fig. \ref{fig:fig2a}, the correlation coefficient $r$ increases as $n$ increases. When $n=10^3$, the simulation and theoretical values are approximately distributed on the curve of the function $f(x)=x$, so the correlation coefficient $r$ tends to $1$, which is shown in Fig. \ref{fig:fig2b}. Even for $n\geqslant3$, the correlation coefficient $r$ for opinion leaders and normal agents is larger than $0.96$, indicating that Theorem \ref{the:the1} is approximately valid for $n\geqslant3$.

\subsection{Parameter Analysis}
Next, we separately investigate the effects of parameters on the sample mean and variance of steady-state opinions. Following the assumption in Section \ref{sec:analysis}, we select the scalar matrix as the stubbornness and proportion matrix, i.e., $\boldsymbol{\mit\Sigma}=\sigma\boldsymbol{I}$, $\boldsymbol{Z}=z\boldsymbol{I}$,  $\boldsymbol{P}=\rho\boldsymbol{I}$, $\boldsymbol{\mit\Pi}=\pi\boldsymbol{I}$, and $\boldsymbol{\mit\Theta}=\theta\boldsymbol{I}$, and the row-normalized all-one matrix as the influence matrix, i.e., $\boldsymbol{W}=q^{-1}\boldsymbol{E}_{q,q}$ and $\boldsymbol{U}=q^{-1}\boldsymbol{E}_{p,q}$. 
Below, we will use the method of controlling variables to study a particular model parameter, while keeping the other parameters constant. The default parameters we have selected are shown in Table \ref{tab:tab5}. 

\begin{table}
\caption{Default values of model parameters.}
\label{tab:tab5}
\centering
\setlength{\tabcolsep}{3pt}
\begin{tabular}{cc|cc|cc|cc}
\toprule
Parameter&Value&Parameter&Value&Parameter&Value&Parameter&Value\\
\midrule
$n$&10\textsuperscript{4}&$\sigma$&1/2&$\alpha$&1&$a_m$&1\\
$p$&10\textsuperscript{3}&$\rho$&1/3&$\beta$&2&$b_m$&1\\
$q$&10\textsuperscript{3}&$\pi$&1/3&$a$&1&$a_x$&1\\
$T$&10\textsuperscript{2}&$\theta$&1/3&$b$&1&$b_x$&1\\
\bottomrule
\end{tabular}
\end{table}

\subsubsection{Sample mean of steady-state opinions}

\begin{figure*}[!t]
\centering
\subfloat[Message distribution]{\includegraphics[width=1.75in]{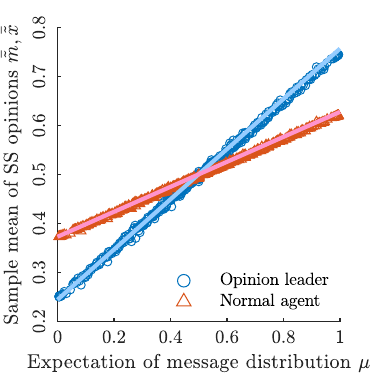} \label{fig:fig3a}}
\subfloat[Initial opinion]{\includegraphics[width=1.75in]{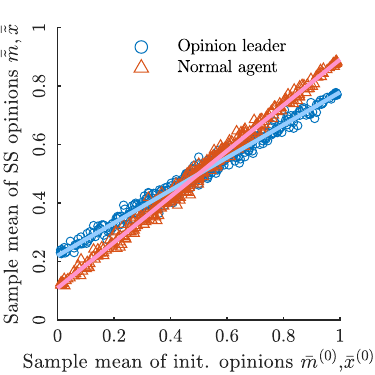} \label{fig:fig3b}}
\subfloat[Level of stubbornness: opinion leader]{\includegraphics[width=1.75in]{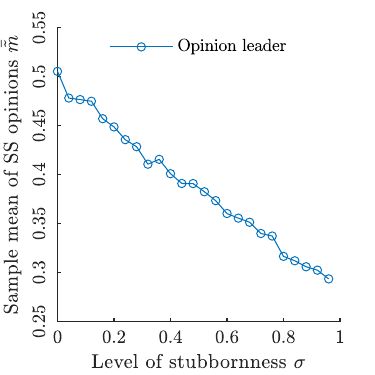} \label{fig:fig3c}}
\subfloat[Level of stubbornness: normal agent]{\includegraphics[width=1.75in]{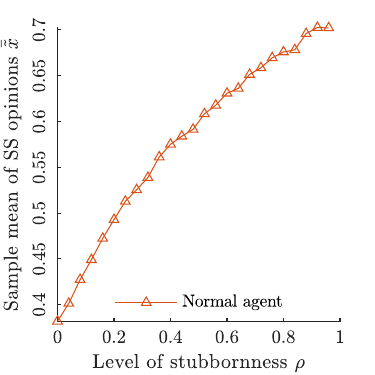} \label{fig:fig3d}}
\caption{Numerical experiment results: the sample mean of steady-state opinions.}
\label{fig:fig3}
\end{figure*}

Three factors influence the sample mean of steady-state opinions: message distribution, initial opinion, and level of stubbornness.

\textbf{Message distribution}. We take the parameters $(a,b)$ of the message distribution throughout the interval $(0,2)\times(0,2)$ to change the expectation of message distribution $\mu$, which is equal to $\frac{a}{a+b}$. The relationship between the sample mean of steady-state opinions $\bar{\tilde{m}}$, $\bar{\tilde{x}}$ and the expectation of message distribution $\mu$ is shown in Fig. \ref{fig:fig3a}. And we can see that ceteris paribus, the sample mean of steady-state opinions increases linearly with the expectation of message distribution, which is consistent with Corollary \ref{the:the3}.

\textbf{Initial opinion}. We take the parameters $(a_m,b_m)$ throughout the interval $(0,2)\times(0,2)$ to change the sample mean of opinion leaders' initial opinions $\bar{m}^{(0)}$, which is equal to $\frac{a_m}{a_m+b_m}$. And we take the parameters $(a_x,b_x)$ throughout the interval $(0,2)\times(0,2)$ to change the sample mean of normal agents' initial opinions $\bar{x}^{(0)}$, which is equal to $\frac{a_x}{a_x+b_x}$. The relationship between the sample mean of steady-state opinions $\bar{\tilde{m}}$, $\bar{\tilde{x}}$ and the sample mean of initial opinions $\bar{m}^{(0)}$ and $\bar{x}^{(0)}$ is shown in Fig. \ref{fig:fig3b}. And we can see that ceteris paribus, the sample mean of steady-state opinions increases linearly with the sample mean of initial opinions, which is consistent with Corollary \ref{the:the7}.

\textbf{Level of stubbornness}. According to Section Corollary \ref{the:the5}, as the level of stubbornness increases, the sample mean of steady-state opinions shifts from the expectation of message distribution or the sample mean of opinion leaders' steady-state opinions to the sample mean of initial opinions. We set $\alpha=\beta=1$, $a_m=b_x=2$, and $a_x=b_m=5$, and thus $\mu=\frac{1}{2}$, $\bar{m}^{(0)}=\frac{2}{7}$, and $\bar{x}^{(0)}=\frac{5}{7}$. Fig. \ref{fig:fig3c} shows the relationship between the sample mean of opinion leaders' steady-state opinions $\bar{\tilde{m}}$ and the level of stubbornness $\sigma$. We can see that as $\sigma$ increases, $\bar{\tilde{m}}$ shifts linearly from $\mu$ to $\bar{m}^{(0)}$. Fig. \ref{fig:fig3d} shows the relationship between the sample mean of normal agents' steady-state opinions $\bar{\tilde{x}}$ and the level of stubbornness $\rho$. Here, we choose $\sigma=\frac{1}{2}$ and $\theta=\frac{1-\rho}{2}$, and thus $\bar{\tilde{m}}=\frac{11}{28}$. We can see that as $\rho$ increases, $\bar{\tilde{x}}$ shifts from $\bar{\tilde{m}}$ to $\bar{x}^{(0)}$, and the curve corresponds to a fractional function. The above results are consistent with Corollary \ref{the:the5}.

\subsubsection{Sample variance of steady-state opinions}

\begin{figure*}[!t]
\centering
\subfloat[Initial opinion]{\includegraphics[width=1.75in]{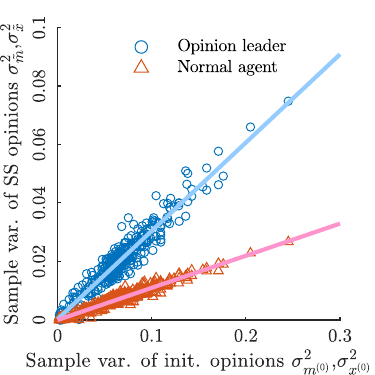} \label{fig:fig4a}}
\subfloat[Level of stubbornness]{\includegraphics[width=1.75in]{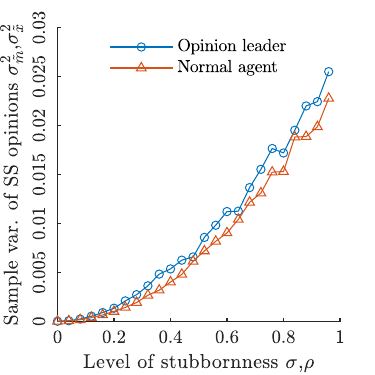} \label{fig:fig4b}}
\subfloat[Preference coefficient $\beta$]{\includegraphics[width=1.75in]{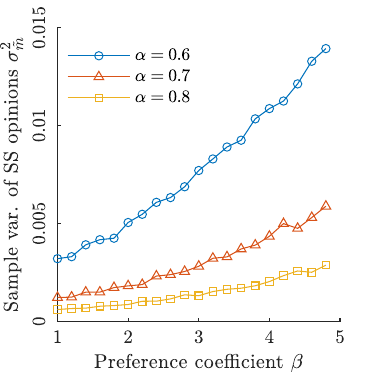} \label{fig:fig4c}}
\subfloat[Preference coefficient $\alpha$]{\includegraphics[width=1.75in]{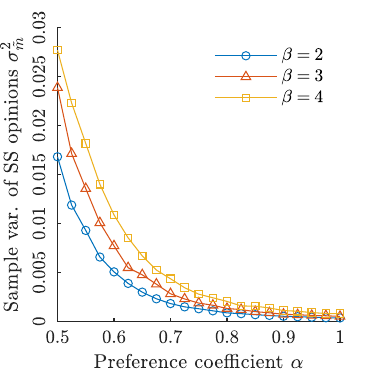} \label{fig:fig4d}}
\caption{Numerical experiment results: the sample variance of steady-state opinions.}
\label{fig:fig4}
\end{figure*}

Three factors influence the sample variance of steady-state opinions: initial opinion, level of stubbornness, and preference coefficient.

\textbf{Initial opinion}. We take the parameters $(a_m,b_m)$ throughout the interval $(0,2)\times(0,2)$ to change the sample variance of opinion leaders' initial opinions $\sigma^2_{m^{(0)}}$, which is equal to $\frac{a_mb_m}{(a_m+b_m)^2(a_m+b_m+1)}$. And we take the parameters $(a_x,b_x)$ throughout the interval $(0,2)\times(0,2)$ to change the sample variance of normal agents' initial opinions $\sigma^2_{x^{(0)}}$, which is equal to $\frac{a_xb_x}{(a_x+b_x)^2(a_x+b_x+1)}$. The relationship between the sample variance of steady-state opinions $\sigma^2_{\tilde{m}}$, $\sigma^2_{\tilde{x}}$ and the sample variance of initial opinions $\sigma^2_{m^{(0)}}$ and $\sigma^2_{x^{(0)}}$ is shown in Fig. \ref{fig:fig4a}. And we can see that ceteris paribus, the sample variance of steady-state opinions increases linearly with the sample variance of initial opinions, which is consistent with Corollary \ref{the:the7}.

\textbf{Level of stubbornness}. We also set $\alpha=\beta=1$, $a_m=b_x=2$, and $a_x=b_m=5$. The relationship between the sample variance of steady-state opinions $\sigma^2_{\tilde{m}}$, $\sigma^2_{\tilde{x}}$ and the level of stubbornness $\sigma$ and $\rho$ is shown in Fig. \ref{fig:fig4b}. And we can see that ceteris paribus, the sample variance of steady-state opinions increases with the level of stubbornness as a power or quadratic function, which is consistent with Corollary \ref{the:the5}.

\textbf{Preference coefficient}. First, we study the impact of $\beta$. We take $\beta$ throughout the interval $(1,5)$, and plot the relationship between the sample variance of steady-state opinions $\sigma^2_{\tilde{m}}$ and preference coefficient $\beta$ in Fig. \ref{fig:fig4c}. And we can see that ceteris paribus, the sample variance of steady-state opinions increases with $\beta$. Then, we study the impact of $\alpha$. We take $\alpha$ throughout the interval $(0.5,1]$, and plot the relationship between the sample variance of steady-state opinions $\sigma^2_{\tilde{m}}$ and preference coefficient $\alpha$ in Fig. \ref{fig:fig4d}. And we can see that ceteris paribus, the sample variance of steady-state opinions decreases with $\alpha$. The above results are consistent with Corollary \ref{the:the6}.

While we only present results for specific parameter values, the experiments can be repeated for other parameter configurations as well. Overall, the results provide evidence that our Two-Step Model is valid.

\section{Social Experiments}
\label{sec:social}
In this section, we conduct an opinion game social experiment to test the validity of the Two-Step Model and compare it with the baseline models.

\subsection{Experiment Design and Data Description}
To validate the opinion model, it is necessary to gather the opinion values of individuals in real-life situations over a period of time. Following prior work \cite{friedkin1990social, cartwright1971risk}, we propose an extension to the experiment, which includes a two-step process involving opinion leaders and normal agents, as well as the messages transmitted by the sources to the social networks.


Our experiment involves 2 tasks and requires the participation of 4 subjects in each task. Task 1 comprises 2 opinion games, where 4 agents complete the games individually. Task 2 consists of 7 opinion games, in which either 0, 1, or 2 subjects act as opinion leaders while the remaining subjects act as normal agents.
Initially, all 4 subjects read a paragraph describing a risk-taking scenario and form their opinions. Next, the opinion leader receives 3 or 4 messages, not available to the normal agents, and then expresses their opinion. Finally, the normal agents receive the opinion leader's opinion and revise their initial opinions. After that, all subjects assess the factors that influenced their opinion formation.

The on-site social experiment involved 120 subjects divided into 30 groups. In Task 1, each subject completed the decision-making process for 2 scenarios, resulting in 116 valid data points used to estimate the preference coefficients. In Task 2, all subjects within each group jointly completed the decision-making process for 7 scenarios, resulting in a total of 420 observed values, including 120 from the opinion leaders and 300 from the normal agents. The experiment investigated various variables for each subject, such as their initial and final (steady-state) opinions, level of stubbornness, influence matrix, and message selective coefficient in different scenarios.

\subsection{Result Analysis}
\subsubsection{Message preference}
The scenarios presented in Task 1 are not commonly encountered in daily life, and the subjects may have little prior knowledge about them. Therefore, their initial opinions may have little or no effect. This suggests that the level of stubbornness among the subjects can be considered negligible, and thus their initial opinions are ignored in Task 1.
To estimate the subjects' overall preference coefficients, which represent their inclination towards the arguments presented, we can use regression analysis based on the theory of statistics. The supplementary materials provide more information about the regression methods used in this study. According to the regression results, the estimated overall preference coefficients are $\hat{\alpha}=0.808<1$ and $\hat{\beta}=2.117>1$. The results of regression analysis for different scenarios also support this conclusion.


\subsubsection{Model accuracy}
We also investigate the correlation between the predicted opinions $\boldsymbol{m}^{\mathrm{p}}, \boldsymbol{x}^{\mathrm{p}}$ and observed values $\boldsymbol{m}^{\mathrm{o}}, \boldsymbol{x}^{\mathrm{o}}$. As previously stated, when $n\geqslant3$, the correlation between the simulation and theoretical values exceeds $0.96$, indicating that the number of messages is sufficient to validate the theorems. The linear regression result indicates that for both opinion leaders and normal agents, the slopes and intercepts are close to $1$ and $0$, respectively, which is presented in the supplementary materials. Consequently, the predicted value is a good approximation of the observed value.

\subsubsection{Comparison with other models}
\begin{table*}
\caption{RMSE between the predicted and observed steady-state opinions of opinion leaders and normal agents.}
\label{tab:tab4}
\centering
\setlength{\tabcolsep}{5.6pt}
\begin{tabular}{c|ccccccc|ccccccc}
\toprule
\multirow{1}{*}[-2pt]{Opinion model}&\multicolumn{7}{c|}{Opinion leaders}&\multicolumn{7}{c}{Normal agents}\\
\cmidrule{2-15}
\multirow{1}{*}[2pt]{of opinion leaders}&HK&BOF&SBC&LCSN&LnCSN&SinCSN&\textbf{MP}&HK&BOF&SBC&LCSN&LnCSN&SinCSN&\textbf{MP}\\
\midrule
Basketball game&0.273&0.277&\textbf{0.220}&0.226&0.225&0.227&0.223&0.167&0.168&0.144&\textbf{0.083}&0.165&0.161&0.154\\
Surgery&0.143&\textbf{0.135}&0.167&0.162&0.172&0.155&0.137&0.115&\textbf{0.113}&0.126&0.162&0.133&0.128&0.122\\
Business partner&0.189&0.187&0.221&0.188&0.197&\textbf{0.183}&0.199&0.150&0.148&0.159&\textbf{0.130}&0.135&0.136&0.150\\
College admission&\textbf{0.115}&0.120&0.179&0.149&0.168&0.135&0.117&\textbf{0.080}&0.081&0.103&0.099&0.105&0.095&0.084\\
Poll&0.206&0.188&0.205&0.174&0.175&0.174&\textbf{0.169}&\textbf{0.090}&0.094&0.105&0.097&0.099&0.096&0.091\\
Tourist attraction&\textbf{0.155}&0.192&0.175&0.163&0.174&0.156&0.161&\textbf{0.100}&0.120&0.113&0.114&0.118&0.110&0.103\\
\midrule
Overall&0.186&0.190&0.194&0.176&0.183&0.172&\textbf{0.170}&0.117&0.119&0.121&0.120&0.122&0.118&\textbf{0.116}\\
\bottomrule
\end{tabular}
\end{table*}

We also compare our MP model with benchmark models by analyzing the accuracy in predicting the steady-state opinions of both opinion leaders and normal agents, including the HK model \cite{hegselmann2002opinion}, BOF model \cite{dandekar2013biased}, SBC model \cite{liu2013social, baumann2021emergence}, LCSN model (CSN model with linear function), LnCSN model (CSN model with logarithmic function), SinCSN model (CSN model with sine function) \cite{mao2018spread}. And all parameters in these models are optimal and are shown in the supplementary materials. We evaluate the accuracy of the models by computing the root mean square error (RMSE) between the predicted steady-state opinions and the observed values. For example, the RMSE between the predicted steady-state opinions $\boldsymbol{m}^{\mathrm{p}}$ and the observed values $\boldsymbol{m}^{\mathrm{o}}$ of opinion leaders is given by
\begin{equation*}
    \textstyle
    \mathrm{RMSE}(\boldsymbol{m}^{\mathrm{p}}, \boldsymbol{m}^{\mathrm{o}})=\sqrt{\sum_{i=1}^p{\left(m^\mathrm{p}_i-m^\mathrm{o}_i\right)^2}\big/p}.
\end{equation*}
If the RMSE is lower, the model is considered more accurate. 
Table \ref{tab:tab4} demonstrates that our MP model outperforms the other models or exhibits minimal deviation from the best-performing model in terms of RMSE for opinion leaders' opinions across all scenarios. Moreover, our MP model shows the best RMSE performance for normal agents' opinions in the overall data. Therefore, these results suggest the validity of our model.

\section{Conclusion}
\label{sec:conclusion}
To model the two-step process including message sources, opinion leaders, and normal agents, we propose the Two-step Model. And we propose the concept of message preference for the selective exposure feature. Our results show that agents' opinions are influenced by multiple factors, such as message distribution, initial opinion, level of stubbornness, and preference coefficient, and the sample mean of normal agents' opinions tends to be influenced by opinion leaders. We conduct numerical and social experiments to verify the accuracy of our model, which demonstrates its effectiveness in outperforming existing benchmarks and better describing the two-step process.
Although our model considers only the two-step process, it provides a valuable contribution to the analysis of the multi-step process and lays a foundation for future research in this area.

\bibliographystyle{IEEEtran}
\bibliography{bibliology}


\vspace{11pt}

\end{document}